\newtheorem{corollary}{Corollary}
\def\BState{\State\hskip-\ALG@thistlm}
\pgfplotsset{every tick label/.append style={font=\scriptsize}}
\pgfplotsset{every axis label/.append style={font=\scriptsize}}
\pgfplotsset{legend style={font=\scriptsize}}
\pgfplotsset{
	compat=1.11,
	legend image code/.code={
		\draw[mark repeat=2,mark phase=2]
		plot coordinates {
			(0cm,0cm)
			(0.25cm,0cm)        
			(0.5cm,0cm)         
		};%
	}
}
\DeclarePairedDelimiter\abs{\lvert}{\rvert}%
\newcommand{\qtilde}[0]{\tilde{\bm{q}}}
\newcommand{\dbm}[0]{\bm{d}}
\newcommand{\XX}{\mbox{\tiny \it XX'}}
\newcommand{\X}{\mbox{\tiny \it X}}
\theoremstyle{plain}
\newtheorem{thm}{Theorem}
\newtheorem{lem}{Lemma} 
\newtheorem{prop}{Proposition}
\newtheorem{rem}{Remark}
\newtheorem{defn}{Definition}
\newtheorem{exmp}{Example}
\DeclareRobustCommand{\T}{^\intercal}
\theoremstyle{remark}
\def\BState{\State\hskip-\ALG@thistlm}
\pgfplotsset{every tick label/.append style={font=\scriptsize}}
\pgfplotsset{every axis label/.append style={font=\scriptsize}}
\pgfplotsset{legend style={font=\scriptsize}}
\pgfplotsset{
	compat=1.11,
	legend image code/.code={
		\draw[mark repeat=2,mark phase=2]
		plot coordinates {
			(0cm,0cm)
			(0.25cm,0cm)        
			(0.5cm,0cm)         
		};%
	}
}
\newcommand{\response}[1]{\textcolor{black}{#1}}
\crefname{lemma}{lemma}{lemmas}
\Crefname{lemma}{Lemma}{Lemmas}
\crefname{thm}{theorem}{theorems}
\Crefname{thm}{Theorem}{Theorems}
\crefname{defn}{definition}{definitions}
\Crefname{defn}{Definition}{Definitions}
\crefname{prop}{proposition}{propositions}
\Crefname{prop}{Proposition}{Propositions}
\crefname{lem}{lemma}{lemmas}
\Crefname{lem}{Lemma}{Lemmas}
\crefname{algorithm}{algorithm}{algorithms}
\Crefname{algorithm}{Algorithm}{Algorithms}
\crefname{subsection}{subsection}{subsections}
\title{Differential Privacy for Class-based Data:\\A Practical Gaussian Mechanism}
\author{ Raksha Ramakrishna,~\IEEEmembership{Member, ~IEEE}, Anna Scaglione,~\IEEEmembership{Fellow, ~IEEE}, Tong Wu,~\IEEEmembership{Member,~IEEE}, \\
Nikhil Ravi,~\IEEEmembership{Student Member, ~IEEE} and Sean Peisert,~\IEEEmembership{Senior Member,~ IEEE}%
\thanks{Raksha Ramakrishna is with Division of Network and Systems Engineering, EECS, KTH Royal Institute of Technology, Stockholm, Sweden. e-mail: rakshar@kth.se.}
\thanks{Anna Scaglione, Tong Wu and Nikhil Ravi are with the Department of Electrical and Computer Engineering, Cornell Tech, New York City, NY 10044  USA. e-mail:\{as337, tw385, nr337\}@cornell.edu.}
\thanks{Sean Peisert is with the Computer Sciences Research and Development
Department, Lawrence Berkeley National Laboratory, Berkeley, CA 94720
USA. e-mail: sppeisert@lbl.gov}
\thanks{This research was supported by the Director, Cybersecurity, Energy Security, and Emergency Response, Cybersecurity for Energy Delivery Systems program, of the U.S. Department of Energy, under contract DE-AC02-05CH11231.  Any opinions, findings, conclusions, or recommendations expressed in this material are those of the authors and do not necessarily reflect those of the sponsors of this work.}
}
\begin{document}
	\maketitle 
	
\begin{abstract}
    In this paper, we present a notion of differential privacy (DP) for data that comes from different classes. Here, the class-membership is private information that needs to be protected. The proposed method is an output perturbation mechanism that adds noise to the release of query response such that the analyst is unable to infer the underlying class-label. The proposed DP method is capable of not only protecting the privacy of class-based data but also meets quality metrics of accuracy and is computationally efficient and practical. We illustrate the efficacy of the proposed method empirically while outperforming the baseline \response{additive} Gaussian noise mechanism. We also examine a real-world application and apply the proposed DP method to the autoregression and moving average (ARMA) forecasting method, protecting the privacy of the underlying data source. 
    Case studies on the real-world advanced metering infrastructure (AMI) measurements \response{of household power consumption} validate the excellent performance of the proposed DP method while also satisfying the accuracy of forecasted \response{power consumption} measurements.
\end{abstract}	
\begin{IEEEkeywords}
Differential Privacy, class-based privacy, Gaussian mechanism,  autoregression and moving average, smart meter data.
\end{IEEEkeywords}

\section{Introduction}
Differential privacy (DP)\cite{dwork2008differential,dwork2014algorithmic}  has become one of the most important concepts in database privacy, gaining an important foothold in ensuring that personal or data private to individual database entries remains indistinguishable  after the database is queried. Broadly speaking, a privacy mechanism for a certain data query is differentially private if the output of this data query changes minimally, or with a very low probability when  a single database entry is added or removed. 
Numerous papers give a comprehensive overview of differential privacy. In \cite{sarwate2013signal}, the authors treat the problem of differential privacy from a signal processing perspective and provide a review of the different mechanisms employed. Differential privacy mechanisms can be broadly classified as input perturbation, i.e., adding noise to the input or to the data before responding to the query, and output perturbation, noise addition after the query response is computed. 
Most of the time, Laplacian or Gaussian noise is added to the input or the output of the query. The so-called {\it exponential mechanism}~\cite{mcsherry2007mechanism,nissim2012approximately} is another often-used differential privacy mechanism with good guarantees, particularly in the case where the utility of differentially private query response is also taken into account. 
\par Traditional or classical Gaussian mechanisms \cite{dwork2008differential} add uncorrelated Gaussian noise, whose variance is calculated on the basis of the sensitivity of neighboring sets of private data to the query and privacy-level parameters. It was shown in \cite{balle2018improving} that the variance could be further reduced given a privacy budget to obtain better utility. A post-processing or denoising mechanism is also introduced to improve the accuracy of the query.

\par Despite the popularity and use of DP in a variety of situations, there are cases where it is not possible to employ the original definition and mechanisms for DP. A main drawback with the traditional DP is that the definition of neighboring datasets which pertains to the addition or removal of a single data point is not applicable to many use cases. For example, query response for a single data stream where some important attributes of this data stream are to be protected rather than hiding the presence or absence of a single data point in a large database or data stream. 
To address this drawback, there have been other notions of privacy that are more applicable to scenarios that do not necessarily cater to large databases. Two such frameworks are the Pufferfish~\cite{KiferPufferFish} and Blowfish~\cite{he2014blowfish} privacy. Here, they define privacy with respect to pairs of `secrets' that must remain indistinguishable after the privacy mechanism is used to release the data. 
These frameworks have been successfully utilized in many scenarios and particularly that of trajectory or location release of an individual \cite{xiao2015protecting, xiao2016dphmm, xiao2017loclok} or activity monitoring \cite{song2017pufferfish}. However, the privacy mechanisms presented in the aforementioned papers are either computationally expensive, do not scale with the size of sample space, or do not incorporate the utility of the released data while designing the mechanism. Furthermore, none of the papers mentioned  discuss an output perturbation method such as the Gaussian mechanism, which optimizes the accuracy of the query.  

\par As an application, we consider the release of forecasts performed fitting an Auto-Regressive Moving-Average (ARMA) model. ARMA prediction models have many variants, which are widely used for time-series forecasts; the underlying assumption is that the time series is a realization of a Gaussian process, with a parametric structure for its covariance that is determined by the parameters of an ARMA filter. The Gaussian assumption implies that the Minimum Mean Squared Error (MMSE) forecast is the conditional mean of the future samples given the past, and it is a linear affine function of the observation that depends on the first and second-order statistics of the process.  A useful application we explore is the release of the predicted power consumption for homes in a way that its statistics are $(\epsilon,\delta)$ private relative to other homes which could be, for instance, in the same neighborhood.  
Releasing time-series in a private manner is yet another paradigm where the definition of neighboring data or adjacency is different, as it is in \cite{fioretto2019optstream,le2020differential}. The query response mechanisms in these papers correspond to down-sampling data points in a window, adding DP-noise, and reconstructing the time-series which is shown to be differentially private.

\par An alternative way to make forecasts private would be to carry out input perturbation or regression in a differentially private manner. Recent work \cite{sheffet2019old} studies linear regression in a differentially private manner, where the input data and the corresponding labels are made private by perturbing sufficient statistics. Related to the proposed Gaussian mechanism, in \cite{bernstein2019differentially}, a Bayesian linear regression is carried out, i.e., the posterior of the regression parameters is computed in a private manner, to protect the underlying data, by perturbing the terms comprised of private data that are required for estimation of regression parameters. Furthermore, the  noisy posterior is `denoised' or, in other words, a noise-aware inference is undertaken.  
In \cite{honkela2021gaussian}, both input data and labels are made private by the addition of noise to sufficient statistics when the query is  Gaussian process regression. Like in \cite{bernstein2019differentially}, noise-aware posterior computation of parameters is discussed.  
Compared to the papers discussed above, the differences not only include a change to the definition of a neighborhood of data but also the output perturbation mechanism, whose privacy guarantees are easier to analyze. 

\subsection{Differential privacy in smart grid systems}
 \response{Utilities and distributed energy resource (DER) providers are increasingly enhancing their data collection capabilities and implementing data-driven digital transformations in their organizations~\cite{currie2023data}. Alongside this trend, there is also a rising demand from third parties to access this data. For instance, organizations seek to identify optimal locations for DER sites, while commercial industries aim to leverage this data to drive digital economy breakthroughs in the energy sector. In addition, law enforcement agencies investigate cybercrimes on these infrastructures, while regulatory bodies work towards decarbonization and grid modernization goals~\cite{unitedstatesdepartmentofenergyofficeofelectricity2020strategy}. However, the growth in the number of measurements collected in distribution systems has led to exacerbated data privacy concerns~\cite{liu2012cyber}.} 
 As mentioned before, the practical application of our DP mechanism we highlight is sharing power consumption forecasts.
Differential privacy mechanisms have been widely used in class-based smart grid data classification and forecasting. For example, \cite{wang2020privacy}   injects functional DP noises into the meter data to achieve a certain level of differential privacy. In \cite{random_BLH}, the authors proposed  a novel randomized battery-based load-hiding algorithm that assures differential privacy for smart metering data. In \cite{gough2021preserving}, an innovative DP compliant algorithm was developed to ensure that the data from consumer’s smart meters are protected. Moreover,  spectral DP is presented to protect the frequency content of  power system time-series data in \cite{parker2021spectral, ou2020singular}. DP techniques are also applied for power system operation to protect users' energy consumption patterns in \cite{fioretto2019differential, dvorkin2020differentially, dvorkin2020differentially2}. In particular, \cite{dvorkin2020differentially, dvorkin2020differentially2} consider a privacy-preserving optimal power flow (OPF) mechanism for distribution grids that secures customer privacy from unauthorized access to OPF solutions, e.g., current and voltage measurements. In \cite{fioretto2019differential}, the authors investigate  how to utilize DP techniques to release the data for power networks where the parameters of transmission lines and transformers are obfuscated.  None of these works consider the release of forecasts that are DP.

\subsection{Contributions}
We introduce the concept of differential privacy for \response{class-based} data. Our framework is related to the Pufferfish privacy mechanism, as discussed above. The sensitive or private information here is the class-label or the hypothesis, rather than the presence or absence of a single data sample.  We propose an additive noise mechanism for the release of the query response on such data so that the analyst is unable to infer the underlying class-label.
The optimality of the \response{query response mechanism} rests on the assumption that the data from each class have a Gaussian multivariate distribution with a class-specific mean and covariance.  Furthermore, we propose a Gaussian query response mechanism that is computationally efficient and practical because it also meets accuracy requirements. We apply the DP mechanism for the release of ARMA forecasts, testing it numerically on synthetic and real data. \response{In particular we,
\begin{itemize}
    \item Define the concept of differential privacy for class-based data by considering an appropriate notion of the neighborhood for a class-label. The notion is borrowed from Pufferfish privacy framework.
    \item Propose an additive noise query response mechanism for class-based data that is optimal when the query is conditionally Gaussian given class-label. The mechanism adds Gaussian noise to the query where the noise parameters are optimally chosen such that a certain level of utility is met while achieving privacy guarantees that are better than the addition of white Gaussian noise.
    \item Illustrate the application of the proposed mechanism when the query is ARMA forecast given a class-label. 
    \item Discuss the real-world application of power consumption forecasts for households while maintaining household-level privacy and satisfying utility requirements. 
\end{itemize}}

\subsection{\response{Organization}}
\response{In Section \ref{sec:prob_setting}, we outline the problem setting and define neighborhood in the context of class-based differential privacy. Then, in Section \ref{sec:metrics}, we review relevant concepts from the literature. In Section \ref{sec:publishing_mechanism}, we introduce additive-noise based publishing mechanism to achieve differential privacy in class-based data. In Section \ref{sec:ARMA_forecasts}, we discuss the application of class-labeled ARMA forecasts since it serves as the theoretical background for the real-world example of forecasting household electricity consumption data.   Section \ref{sec:num_results}, we discuss numerical results for synthetic data and household electricity consumption forecast data. Finally in Section \ref{sec:conclusions}, conclusions and future work are detailed.}

\subsection{Notation}
Boldfaced lowercase letters, $\bm{x}$, are used to denote vectors whereas upper-case letters, $\bm{X}$, are for matrices. 
Calligraphic letters, $\mathcal{X}$, are used to denote sets. \response{We use the symbol $\bm \Sigma$ to denote covariance matrices and $\bm \mu$ to denote mean vectors, and use the suffix to indicate what is the random vector that is averaged; $\mathrm{Tr}(\bm A)$ denotes the trace of the matrix $\bm A$.}

\section{Problem setting}\label{sec:prob_setting}
In this section, we elucidate the problem and review the appropriate definitions from the literature. The setting of the problem is that the \textit{data owner} requires sensitive information to be private, while also wanting to answer queries by a third-party analyst with as little error or distortion to the query response as possible. 

In this work, we denote by $X \in \mathcal{X}$ the sensitive information that the \textit{data owner} wants to hold private.  One can also refer to this as the label, class, or hypothesis $X$, that is hidden from the analyst. For brevity, we use \textit{label} to refer to the sensitive information $X$.
Let the probability distribution of data, $\bm{d} \in \mathbb{R}^{n}$, that is assumed to be generated given the label, $X$, be $f(\bm{d}|X)$.  
 
The {\it analyst} wishes to apply queries or functions $\mathbbm{Q}: \mathbb{R}^{n} \rightarrow \mathbb{R}^{k} $ to the data $\dbm$. We denote the outcome of the query as $\bm{q} \in \mathbb{R}^{k}$, i.e., $\bm{q}=\mathbb{Q}(\dbm)$ which is a stochastic function or function of a random variable. 
We assume that the query at a certain instance only acts on data generated when the system is in a specific label $X$, i.e., the query does not combine information from multiple instances with the system in different labels. Thus, we denote the probability distribution of $\bm{q}$  as $f(\bm{q}|X)$. Note that the query output is already modeled as a \textit{random outcome}. However, releasing $\bm{q}$ as is could aid a malicious analyst in inferring the label $X$ through standard classification schemes with prior information or belief about $X$.
Therefore, rather than answering the query directly $\bm{q}=\mathbb{Q}(\dbm)$, the  \textit{data owner} publishes the output of a randomized algorithm or mechanism denoted by $\mathbbm{A}_{\mathrm Q}$. Thus, the \textit{data owner} releases or publishes query response,
$\qtilde=\mathbbm{A}_{\mathrm Q}(\dbm|X)$. Again, due to the dependence on a label $X$, we denote the probability density of the released query as $f(\qtilde|X)$. 

The goal of the randomized mechanism $\mathbbm{A}_{\mathrm Q}$ is to confuse the analyst by making the query output given the label $X$ indistinguishable from the query output when the underlying label is a \textit{neighbor} to $X$. The definition of neighboring labels will be made clearer later.  

For the published data to be useful, the answer to the query $\qtilde$ is required to meet certain quality metrics, such as accuracy (denoted by $\rho$). For the rest of the paper, \textit{accuracy} refers to accuracy in expectation over the domain of the privatized data.

\subsection{\response{Neighborhood of a class-label}} 
The \textit{data owner} needs to define, a priori for every label $X$, a subset of labels $X'\in {\cal X}$ as neighbors. These neighbor labels are chosen such that privacy is maintained based on our class-membership-based definition. In other words, the labels $X$ and $ X'$ should be indistinguishable from the query answer alone. This is similar to the concept of pair of secrets in the Pufferfish privacy framework \cite{song2017pufferfish}. 

A graph topology ${\cal G}$ whose nodes' set is ${\cal X}$ and the edges set, ${\cal E}$, can encode information on what ought to be hidden,
\begin{align}
 {\cal G}=({\cal X},{\cal E}), ~~{\cal E}=\{(X,X')|X\in {\cal X},  X' \in {\cal X}_X^{(1)}\},
\end{align}
One could use a notion of distance $\mathbbm{d}(X,X')$ that can define a geometric graph structure where the subset ${\cal X}^{(1)}_{X}$ that is the neighborhood ${\cal X}_X^{(1)}$ of each node $X$ is defined as:
\begin{align}
    {\cal X}^{(1)}_{X}=\{X'|\mathbbm{d}(X,X')=1,X'\in{\cal X}\}, 
\end{align}
describing all the $X'$ that are {\it at distance one} from $X$
 and should be giving similar query answers as $X$. Furthermore, the graph structure is assumed to be undirected, 
 \begin{align}
   X' \in {\cal X}_X^{(1)} \Leftrightarrow X \in {\cal X}_{X'}^{(1)}   
 \end{align}
 The \textit{data owner} needs to design the neighborhood graph ${\cal G}$ that determines the neighborhood for each label $X$. Neighborhoods can be designed based on desired indistinguishable  labels. In the case of very strict privacy requirements, a fully connected graph can be used as a neighborhood graph so that every label has all other labels as neighbors.

Next, we review the relevant concepts from the literature to compare and contrast the ideas from the problem setting described above.

\section{\response{Review of Privacy Definitions and Metrics}}\label{sec:metrics}

\subsection{Definition of Privacy }
Conventionally, given the random published answer or realization of $\qtilde$ in the differential privacy literature, the  name {\it privacy loss} is used as a synonym for the log-likelihood ratio:  
\begin{align}
L_{\XX}(\qtilde)&\triangleq\ln \frac{f(\qtilde|X)}{f(\qtilde|X')}. \label{eq:privacycosnt}
\end{align}
The reason for the name is that in classical statistical inference, $L_{\XX}(\qtilde)>0$ yields the decision that the query output is generated by the distribution $\bm{f}(\qtilde|X)$. If this event is infrequent, then often an alternative hypothesis $X'$ (where the distribution is $\bm{f}(\qtilde|X')$) will be chosen as the right probabilistic model. 
We now introduce the notion of $(\epsilon, \delta)$ differential privacy, which applies to any random vector $\qtilde$ that is not conditionally independent of the private information $X$:

\begin{defn}[$(\epsilon, \delta)$ Probabilistic Differential Privacy (PDP) \cite{Machanavajjhala}]\label[defn]{defn:e-d-privacy} Consider the probability density of released query or randomized mechanism $\qtilde \sim f(\qtilde|X)$ that changes depending on the class $X\in {\cal X}$. The randomized mechanism producing $\qtilde$ is  $(\epsilon, \delta)$- Probabilistic Differentially Private (PDP)  iff: 
\begin{align}
Pr\left( \abs{L_{\XX}
    (\qtilde)}>\epsilon\right) \leq \delta ~~ \forall (X, X') \in {\cal E} \label{eq:epsilon_delta_relation}.
\end{align}
\end{defn}
It can be shown that $(\epsilon,\delta)$-PDP is a strictly stronger condition than $(\epsilon,\delta)$-DP. 
\begin{thm}[PDP implies DP~\cite{mcclure2015relaxations}]\label[thm]{thm:PDP-DP}
If a randomized mechanism is $(\epsilon,\delta)$-PDP, then it is also $(\epsilon,\delta)$-DP, i.e.,
\[
    (\epsilon,\delta)-\text{PDP}  \Rightarrow (\epsilon,\delta)-\text{DP}, \text{ but } (\epsilon,\delta)-\text{DP} \nRightarrow (\epsilon,\delta)-\text{PDP}.
\]
\end{thm}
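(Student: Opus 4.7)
The plan is to show that the PDP hypothesis \eqref{eq:epsilon_delta_relation} on the privacy loss $L_{\XX}(\qtilde)$ directly implies the set-wise $(\epsilon,\delta)$-DP inequality via a simple truncation of the sample space. First, I would fix an arbitrary neighbor pair $(X,X')\in{\cal E}$ and define the ``bad'' event
\[
B\;=\;\bigl\{\qtilde:\,|L_{\XX}(\qtilde)|>\epsilon\bigr\}
\;=\;\bigl\{\qtilde:\,f(\qtilde|X)>e^{\epsilon}f(\qtilde|X')\ \text{or}\ f(\qtilde|X)<e^{-\epsilon}f(\qtilde|X')\bigr\},
\]
for which \Cref{defn:e-d-privacy} guarantees $Pr(\qtilde\in B\,|\,X)\le\delta$. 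By construction, on the complement $B^c$ the two conditional densities satisfy the pointwise sandwich $e^{-\epsilon}f(\qtilde|X')\le f(\qtilde|X)\le e^{\epsilon}f(\qtilde|X')$.

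Next, for any measurable event $S\subseteq\mathbb{R}^k$, I would split
\[
Pr(\qtilde\in S\,|\,X)\;=\;Pr(\qtilde\in S\cap B\,|\,X)\;+\;Pr(\qtilde\in S\cap B^c\,|\,X).
\]
The first term is bounded by $Pr(\qtilde\in B\,|\,X)\le\delta$ directly. For the second term, the pointwise bound on $B^c$ lets me integrate $f(\cdot|X)\le e^{\epsilon}f(\cdot|X')$ over $S\cap B^c$, giving $Pr(\qtilde\in S\cap B^c\,|\,X)\le e^{\epsilon}Pr(\qtilde\in S\cap B^c\,|\,X')\le e^{\epsilon}Pr(\qtilde\in S\,|\,X')$. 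Adding the two pieces yields the standard DP inequality $Pr(\qtilde\in S\,|\,X)\le e^{\epsilon}Pr(\qtilde\in S\,|\,X')+\delta$, which, since $S$ and the neighboring pair were arbitrary, closes the forward implication.

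For the converse non-implication, I would not attempt a proof but instead exhibit a counter-example. The intuition is that DP controls only integrated probabilities over measurable sets, so a mechanism can place a narrow tail on which $|L_{\XX}(\qtilde)|>\epsilon$ with $X$-probability strictly larger than $\delta$ while compensating with regions of unusually small privacy loss so that the set-wise inequality still holds uniformly in $S$. The construction in \cite{mcclure2015relaxations} makes this explicit, so citing it suffices.

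The main obstacle here is bookkeeping rather than analysis: one must handle both tails of $|L_{\XX}|>\epsilon$ when defining $B$, and apply the density sandwich on $B^c$ in the correct direction to pass from an integral against $f(\cdot|X)$ to one against $f(\cdot|X')$. Beyond that, the argument requires no new analytic machinery, only measurability of the conditional densities and the law of total probability.
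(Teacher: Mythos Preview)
Your argument is correct and is in fact the standard truncation proof of this implication. The paper, however, does not give its own proof of \Cref{thm:PDP-DP}: the theorem is stated with attribution to \cite{mcclure2015relaxations} and used as a black box, with no derivation in the text or appendix. So there is nothing to compare against beyond noting that your split of $S$ into $S\cap B$ and $S\cap B^c$, followed by the pointwise density bound on $B^c$, is exactly the classical route and would serve as a self-contained substitute for the citation. Your handling of the non-implication by pointing to the counter-example in \cite{mcclure2015relaxations} is also appropriate, since a single explicit construction suffices and the paper itself defers to that reference.
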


Given that PDP provides a more intuitive understanding of privacy than $(\epsilon,\delta)$ DP and is a strictly stronger condition, we make use of PDP throughout this paper. 
 PDP is not closed under post-processing \cite{meiser2018approximate} only if, prior to the query, one applies a non-bijective transformation.
Going forward, we drop the absolute value while writing $\abs{L_{\XX}
    (\qtilde)}>\epsilon$ since 
    \begin{align}
    \abs{L_{\XX}
    (\qtilde)}>\epsilon \implies L_{\XX}
    (\qtilde) > \epsilon , L_{\XX}
    (\qtilde) < -\epsilon \\
     L_{\XX}
    (\qtilde) < -\epsilon  \implies  -L_{\XX}
    (\qtilde) > \epsilon  \implies L_{\XX} (\qtilde) > \epsilon 
    \end{align}
As the neighborhood graph is undirected, and $Pr\left( \abs{L_{\XX}
    (\qtilde)}>\epsilon\right) \leq \delta ~~ \forall (X, X') \in {\cal E}$, it suffices to drop the absolute value and 
    $Pr\left( {L_{\XX}
    (\qtilde)}>\epsilon\right) \leq \delta ~~ \forall (X, X') \in {\cal E}$ is equivalent to \cref{eq:epsilon_delta_relation}.
It is important to remark that  prior statistical information about $X$ is generally available, i.e., $f(X)$ is a prior belief or distribution of $X$. The following remark explains how  \Cref{defn:e-d-privacy} is sufficient in  general.
\begin{rem}
If the analyst operates in the Bayesian setting and there is a statistical prior distribution on the possible outcomes for $X$, i.e., a probability model on $\mathcal{X}$, a more meaningful privacy loss definition than \cref{eq:privacycosnt} is expressed in terms of posterior distributions $f(X|\qtilde)$. Note, however, that:
\begin{align}
    \ln  \frac{f(X | \qtilde)}{f(X' |\qtilde)}=L_{\XX} ({\qtilde})+\ln\frac{f(X)}{f(X')}
\end{align}
where $f(X)$ is the prior distribution. This means that $(\epsilon,\delta)$ private according to \Cref{defn:e-d-privacy} then:
\begin{align}
\delta
 &\geq
\sup_{X\in{\cal X}}\sup_{X'\in{\cal X}^{(1)}_{X}} Pr\left(
   L_{\XX}(\qtilde)>\epsilon-\ln\frac{f(X)}{f(X')}
 \right).
\end{align}
\end{rem}

The implication is that the privacy loss distribution is inherently a function of the statistics of $L_{\XX}(\qtilde)$, even when there are priors, which makes the extension to the case where priors are given relatively straightforward. For this reason, we will continue the discussion considering the latent information, or label $X$, as deterministic and unknown. 

\subsection{ \response{Privacy under accuracy constraints}}
One of the aspects in which our framework differs from most other DP work is that we consider the constraint for the \textit{data owner} to guarantee a certain level of accuracy in the query response, we next define \textit{query accuracy}. This is different from the notion of \textit{query sensitivity}, which is an intrinsic property of the data,  and it is a measure of the utility of the DP query response.
 
For a {\it continuous query}, the {\it accuracy} is a measure of how dissimilar the answer to the query is, and it is desired to have $\qtilde\approx \mathbbm{Q}(\dbm)$. A possible simple measure, the average mean squared (MS) error per entry, is specified in the following:
\begin{defn}[Mean Square Error Accuracy]\label{def:contaccuracy}
For a continuous function $\mathbbm{Q}: \mathbb{R}^n \rightarrow \mathbb{R}^k$:
\begin{align}\label{eq:ms-sensitivity}
    \rho_{\mathbbm{Q}|X} =\frac 1 k \mathbb{E}[\|\qtilde-\mathbbm{Q}(\dbm)  \|_2^{2}],
\end{align}
is the \textbf{expected mean-square query accuracy}, 
where $\dbm \sim f(\dbm|X) $ and $X\in \mathcal{X}$ is the information to hold private.
\end{defn}

In our work, we adopt the \Cref{defn:e-d-privacy} for $(\epsilon,\delta)$ privacy along with accuracy constraint and thereby define the overall privacy in our framework as follows:
\begin{defn}\label{def:DPA}
A randomized algorithm $\mathbb{A}_{Q}$ with  outcome $\qtilde = \mathbbm{A}_{\mathrm Q}(\dbm | X)$, is $(\epsilon,\delta)$-\textbf{private} meeting an \textbf{accuracy budget} $\rho$ for a query $\mathbbm{Q}$ iff the condition in \cref{eq:epsilon_delta_relation} holds $\forall (X, X') \in {\cal E}$ along with $\rho_{\mathbbm{Q}|X} \leq \rho ~~\forall X \in {\cal X}$. 
\end{defn}

\subsection{\response{Additive noise mechanism}}
The most popular designs of DP algorithms amount to adding random noise to the true query, i.e.:
\begin{equation}
    \qtilde=\bm q+\bm{\eta},
\end{equation}
where $\bm{\eta}$ is drawn from a family of distributions that facilitate the calculation of the $(\epsilon,\delta)$ curves; zero mean Gaussian and Laplacian noise are the most frequent choices. Furthermore, the entries of the vector $\bm \eta$ are independent and identically distributed (i.i.d.) and they are independent \response{of the label} $X$, i.e. $f(\bm \eta|X)=f(\bm \eta)$. This makes only one parameter,  the variance, in the Gaussian and Laplacian distributions available as a degree of freedom, which is set once one defines the values of $(\epsilon,\delta)$. 
 
\response{In this case of additive noise mechanism,} the average MSE accuracy can be expressed as a function of the conditional mean and the covariance of the noise:
\begin{align}
    \rho^{\text{MSE}}_{\mathbbm Q|X}&=\frac 1 k \mathbbm{E}[
    \|\bm \eta\|^2]
    =\mathbb{E}_{\bm{q}|X}\left[\mathrm{Tr}(\bm \Sigma_{\bm{\eta}|\bm{q}})+
 \|\bm \mu_{\bm{\eta}|\bm{q}}\|^2\right]\\
 &= \frac 1 k \mathrm{Tr}(\bm \Sigma_{\bm \eta|X})+ \frac 1 k \|\bm \mu_{\bm{\eta}|X}\|^2,
 \label{eq:MS-accuracy-constr}
\end{align}

\subsection{\response{Case of deterministic queries conditioned on class-label}}
\par In contrast to our framework, the vast majority of the DP literature considers queries that are \response{\emph{deterministic}}  because they do not \response{explicitly} consider the distribution of data but only consider the sensitivity or the range. The true query answer is also deterministic, given $X$. \response{Therefore, as an illustration, we consider deterministic queries and calculate the $(\epsilon,\delta)$ curves when the data is deterministic given the label $X$. We will see that it corresponds to the classical DP additive Gaussian noise mechanism.}

 Let us denote deterministic query as 
$\bm q \equiv \mathbbm{Q}(X)$. The DP algorithm adds zero mean random noise that is independent of $X$. In this case, the noise is added to a constant, i.e.,
\begin{align}
    \qtilde=\mathbbm{Q}(X)+\bm{\eta}~~\Rightarrow~~f(\qtilde|X)=f_{\bm \eta}(\qtilde-\mathbbm{Q}(X)),
\end{align}
 which implies that for any pair $X,X'$  $f(\qtilde|X)$ and $f(\qtilde|X')$ differ only in their means, $\mathbb{Q}(X)$ and $\mathbb{Q}(X')$. 
 Let:
 \begin{align}
     \bm \mu_{\XX}\triangleq \mathbb{Q}(X)-\mathbb{Q}(X').\label{eq:mu_xxDQ}
 \end{align}
Now, query sensitivity for this case \response{can be defined as follows}:
\begin{defn}[Deterministic Query Sensitivity]\label{def:query-sensitivity}
The sensitivity of a deterministic query about the data $X$ is:
\begin{align}\label{eq:query-sensitivity}
    \Delta_p\triangleq \sup_{X\in {\cal X}}\sup_{X'\in {\cal X}^{(1)}_{X}}\|
\mathbbm{Q}(X)-\mathbbm{Q}(X')\|_p. 
\end{align}
where $\mathbbm{d}_Q \triangleq \| \mathbbm{Q}(X)-\mathbbm{Q}(X')\|_p$ is an appropriate notion of distance as $\ell_p$ norm that measures how much the queries applied differ when the label is $X$ or $X'$. 
\end{defn}
Now,  the computation of the $(\epsilon,\delta)$ curves when the data is deterministic given the label $X$ is shown.   
\begin{exmp}[\response{Additive Gaussian noise mechanism for deterministic query given $X$}]
The $(\epsilon,\delta)$ privacy curve is entirely defined by the noise distribution and its change due to a shift in the mean:
\begin{align}
    Pr(L_{\XX}(\qtilde)>\epsilon)&=Pr\left(
    \ln\frac{f_{\bm \eta}(\qtilde-\mathbb{Q}(X))}{f_{\bm \eta}(\qtilde-\mathbb{Q}(X'))}>\epsilon
    \right)\\
    &\equiv
    Pr\left(\ln\frac{f_{\bm \eta}(\bm \eta)}{f_{\bm \eta}(\bm \eta +\bm \mu_{\XX})}>\epsilon
    \right).\label{eq:noise-only-classic-DP}
\end{align}
From \cref{eq:noise-only-classic-DP} it is apparent that the interplay of the noise distribution and the possible values for the offset $\bm \mu_{\XX}$ are all that is needed to establish the $(\epsilon,\delta)$ privacy trade-off.   
Note that in the case of independent  noise added to different dimensions of the query:
\begin{align}
    L_{\XX'}(\qtilde)=\sum_{i=1}^k L_{\XX'}(\tilde{q}_i).
\end{align}
Indicating $[\bm \mu_{\XX}]_i=\mu_i$, the expression in \cref{eq:noise-only-classic-DP} is equivalent to:
\begin{align}
    Pr\left(
    \sum_{i=1}^k \ln \frac{f(\eta_i)}{f(\eta_i+\mu_i)} >\epsilon
    \right)
\end{align}

Next we derive $(\epsilon,\delta)$ bounds for Gaussian zero mean i.i.d. noise,
In this case $\bm \eta\sim {\cal N}(\bm 0,\sigma_{\eta}^2\bm I)$, it is easy to show that \cite{balle2018improving}:
\begin{align*}
 L_{\XX}(\qtilde)=\frac{\bm \mu_{\XX}\T(\qtilde\!-\! \mathbb{Q}(X))}{\sigma_{\eta}^2}+\frac{\|\bm \mu_{\XX}\|^2}{2\sigma_{\eta}^2}, 
\end{align*}
which implies that the likelihood $ L_{\XX}(\qtilde)$ is also Gaussian:
\begin{align}
     L_{\XX}(\qtilde)\sim {\cal N}\left(\frac{\|\bm \mu_{\XX}\|^2}{2\sigma_{\eta}^2},
 \frac{\|\bm \mu_{\XX}\|^2}{\sigma_{\eta}^2}\right).
\end{align}
denoting by 
$Q(v)=\frac 1 {\sqrt{2\pi}}\int_{v}e^{-\frac{u^2} 2}~\mathrm{d}u$, then:
\begin{align}\label{eq:Pr(L>e)-equal-cov}
Pr(L_{\XX}(\qtilde )>\epsilon)=
Q\left(\frac{\epsilon -\frac{\|\bm \mu_{\XX}\|_2^2}{ 2\sigma_{\eta}^2}}{\frac{\|\bm \mu_{\XX}\|_2}{\sigma_{\eta}}}\right)
\end{align}
In this case, since the trend of the probability is a monotonic function of $\|\bm \mu_{\XX}\|_2$ (i.e., the Euclidean distance of the queries) a meaningful definition for query sensitivity in \Cref{def:query-sensitivity} is $\Delta_2=\sup_{X}\sup_{X'\in {\cal X}_{\X}^{(1)}}\|\bm \mu_{\XX}\|_2$, which implies:
\begin{align}
    \delta =
    Q\left(\frac{\epsilon -\frac{\Delta_2^2}{ 2\sigma_{\eta}^2}}{\frac{\Delta_2}{\sigma_{\eta}}}\right), 
\end{align}
and if we set a limit $\rho$ for the MSE, then $\sigma_{\eta}^2\leq \rho/k$. \\
\end{exmp}
\response{The discussion for deterministic queries naturally leads us to discuss the more general case of stochastic query responses for which we propose the publication mechanism. }
\response{Having given all the necessary definitions, we now introduce our method of publishing the query response. We specifically consider query responses that are continuous-valued and Gaussian distributed given  label $X$. This is because such models are ubiquitously used. }

\section{\response{Proposed Additive Gaussian Noise Mechanism for Stochastic Queries}}\label{sec:publishing_mechanism}
\response{The idea behind the proposed query publication methodology $\mathbb{A}_{Q}$ is simple: we also add Gaussian noise to the actual query response, but the noise $\bm{\eta}$ vector of the mechanism we propose $f_{\bm \eta|X}(\bm \eta|X)$, is  label $X$  dependent in general, and its entries of $\bm \eta$ are not i.i.d.    
This leaves us with several additional degrees of freedom to meet an accuracy constraint, which is a function of the noise statistics $f_{\bm \eta|X}(\bm \eta)$, as we will discuss in this section.}

 \par \response{When the query, $\bm{q}=\mathbbm{Q}(\dbm)$ is modeled as being an outcome of a random ensemble whose distribution depends on the  label $X$, sampling from the query distribution, $f(\bm q|X)$ can be considered a randomized mechanism for privacy. This is because the process of sampling from the query distribution $f(\bm q|X)$ is inherently random and provides a certain level of privacy wherein the $(\epsilon, \delta)$ curves can be calculated without adding noise at all, i.e. using $f(\bm q|X)$ in lieu of $f(\qtilde|X)$. 
 If the query responses without noise do not reveal what the underlying $X$ is then there is no need to alter the data, i.e., the query response is naturally private without employing any random mechanism. }
 \par \response{This phenomenon has been studied  as `privacy for free' in \cite{wang2015privacy} where the sensitive dataset is used to learn model parameters in a Bayesian manner. The authors show that releasing one sample from the posterior parameter distribution is differentially private. The query, which is the parameter to be learned, is stochastic which is indeed the same situation that we discuss. } 
 \response{ The difference, however, arises from the proposed definition of class-label privacy. One cannot control the generative mechanism for the data. In the worst case, query distributions arising from neighboring labels  can be quite distinguishable which means that we would need to add suitable noise to further reduce $\delta$. }
The addition of suitable noise can mask the hypotheses and yield a lower $\delta$ for a certain $\epsilon$, sacrificing the accuracy of the response. 
Now, the distribution of the query response after the addition of noise is computed as follows:
\begin{align}
    f(\qtilde|\bm{q},X)=
    f(\qtilde|X)&=\int f_{\bm \eta|X}(\qtilde-\bm{q}|X)f(\bm q|X)~\mathrm{d}\bm q.
\end{align}
The analytical calculation of this convolution integral would be non-trivial in general. 

This is why, in this paper, we discuss the case where the query response given the label $X$ is Gaussian and the noise  is also Gaussian. We do this  because such models are ubiquitously used.
In mathematical terms, if $\bm q\sim {\cal N}(\bm \mu_{\X}, \bm \Sigma_{\X})$ (where $\bm \mu_{\X}$ is the mean and $\bm \Sigma_{\X}$ is the covariance matrix of $\bm q$ with label $X$), and 
$\bm \eta\sim {\cal N}(\bm \mu_{\eta}, \bm \Sigma_{\bm \eta})$, then $\qtilde \sim {\cal N}(
\bm \mu_{\X}+\bm \mu_{\bm \eta},\bm \Sigma_{\X}+\bm \Sigma_{\bm \eta}
)$.
 To streamline the notation we will use the conventions:
\begin{align}
\bm \mu_{\X}&\triangleq\bm \mu_{\bm q|X},~~\tilde{\bm \mu}_{\X}\triangleq\bm \mu_{\qtilde|X} = \bm \mu_{\X}+\bm \mu_{\bm \eta|X},\label{eq:tilde-mu}
\\
\bm \Sigma_{\X}&\triangleq\bm \Sigma_{\bm q|X},~ 
\tilde{\bm \Sigma}_{\X}\triangleq\bm \Sigma_{\qtilde|X} = \bm \Sigma_{\X}+\bm \Sigma_{\bm \eta|X}.\label{eq:tilde-Sigma}
\end{align}
In this case, we bound explicitly the probability of the privacy loss random variable for  both for the query $\bm q$ itself,  $Pr(L_{\XX}(\bm q)>\epsilon)$, and for the published query $\qtilde$, i.e. $Pr(L_{\XX}(\qtilde)>\epsilon)$. 
For the analysis, we  use the following fact:

\begin{prop}\label{prop:log-likelihood-gauss} Let the  log-likelihood be $L_{\XX}(\bm q) = \ln \frac{f(\bm q|X)}{f(\bm q|X')}$. For the queries that  are drawn when the private information is $X$, i.e. $\bm q\sim {\cal N}(\bm \mu_{\X},\bm \Sigma_{\X})$, equivalently we have:
\begin{align}
    Pr(L_{\XX}(\bm q)>\epsilon)\equiv Pr(L_{\XX}(\bm \xi)>\epsilon)
\end{align}
where $\bm \xi$  and $L_{\XX}(\bm \xi)$ are:
\begin{align}
    \bm \xi&\triangleq\bm U_{\XX}\T\bm \Sigma_{X}^{-1/2}(\bm q-\bm \mu_{\X})~\sim~{\cal N}(\bm 0,\bm I);\label{eq:xi}\\
    L(\bm \xi)&\triangleq-\frac 1 2 \ln|\bm \Gamma_{\XX}|+\frac{1}{2}\bm \xi\T(\bm \Gamma_{\XX}\!-\!\bm I)\bm\xi\nonumber\\
    &-\bm \mu_{\XX}\T\bm \Gamma_{\XX}\bm \xi+\!\frac{1}{2}\bm \mu_{\XX}\T\bm \Gamma_{\XX}\bm \mu_{\XX},
    \label{eq:L(xi)}\\
\text{and} ~\bm \mu_{\XX} &\triangleq \bm U_{\XX}\T \bm \Sigma_{\X}^{-1/2}(\bm \mu_{\X'}-\bm \mu_{\X}) \label{eq:mu_XX'}. 
\end{align}
$\bm U_{\XX}, \bm \Gamma_{\XX}$ are the eigenvectors and eigenvalue matrices respectively,
\begin{align}
\bm U_{\XX}\bm \Gamma_{\XX}\bm U_{\XX}\T\triangleq\bm \Sigma_{\X}^{1/2}\bm \Sigma_{\X'}^{-1}\bm \Sigma_{\X}^{1/2}, 
\end{align}
with the diagonal matrix $\bm \Gamma_{\XX}=\mathrm{diag}(\bm \gamma_{\XX})$ and the vector  $\bm \gamma_{\XX}$, whose entries are the eigenvalues in descending order and the unitary matrix of eigenvectors, $\bm U_{\XX}$.
\end{prop}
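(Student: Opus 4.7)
The plan is to compute $L_{XX'}(\bm q)$ directly from the two multivariate Gaussian densities and reduce it to the stated form by a two-stage change of variables: first whitening with respect to $f(\bm q|X)$, then rotating by $\bm U_{XX'}$. This establishes the pointwise identity $L_{XX'}(\bm q)=L(\bm \xi)$ under the transformation, which immediately yields equality of the tail probabilities.

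Expanding the log-ratio of Gaussian PDFs gives
\begin{align*}
L_{XX'}(\bm q) &= -\tfrac{1}{2}\ln\frac{|\bm \Sigma_X|}{|\bm \Sigma_{X'}|} -\tfrac{1}{2}(\bm q-\bm \mu_X)\T\bm \Sigma_X^{-1}(\bm q-\bm \mu_X)\\
&\quad+\tfrac{1}{2}(\bm q-\bm \mu_{X'})\T\bm \Sigma_{X'}^{-1}(\bm q-\bm \mu_{X'}).
\end{align*}
Under $\bm q\sim{\cal N}(\bm \mu_X,\bm \Sigma_X)$, the whitening map $\bm z=\bm \Sigma_X^{-1/2}(\bm q-\bm \mu_X)$ is standard normal, and the orthogonal rotation $\bm \xi=\bm U_{XX'}\T\bm z$ preserves this law, so $\bm \xi\sim{\cal N}(\bm 0,\bm I)$.

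The first quadratic form collapses to $\bm \xi\T\bm \xi$. For the second, I would substitute $\bm q-\bm \mu_{X'}=\bm \Sigma_X^{1/2}\bm U_{XX'}\bm \xi-(\bm \mu_{X'}-\bm \mu_X)$, expand, and use $\bm \Sigma_X^{1/2}\bm \Sigma_{X'}^{-1}\bm \Sigma_X^{1/2}=\bm U_{XX'}\bm \Gamma_{XX'}\bm U_{XX'}\T$ together with the identity $\bm \Sigma_X^{1/2}\bm \Sigma_{X'}^{-1}(\bm \mu_{X'}-\bm \mu_X)=\bm U_{XX'}\bm \Gamma_{XX'}\bm \mu_{XX'}$, obtained by inserting $\bm \Sigma_X^{-1/2}\bm \Sigma_X^{1/2}=\bm I$ and applying \cref{eq:mu_XX'}. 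These reduce the three pieces to $\bm \xi\T\bm \Gamma_{XX'}\bm \xi$, $-2\bm \xi\T\bm \Gamma_{XX'}\bm \mu_{XX'}$, and $\bm \mu_{XX'}\T\bm \Gamma_{XX'}\bm \mu_{XX'}$. The log-determinant ratio simplifies by the cyclic/similarity invariance of the determinant: $\ln(|\bm \Sigma_X|/|\bm \Sigma_{X'}|)=\ln|\bm \Sigma_X^{1/2}\bm \Sigma_{X'}^{-1}\bm \Sigma_X^{1/2}|=\ln|\bm \Gamma_{XX'}|$.

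Collecting all four contributions and combining like terms reproduces exactly the expression for $L(\bm \xi)$ in \cref{eq:L(xi)}, giving $L_{XX'}(\bm q)=L(\bm \xi)$ and hence the claim. The argument is entirely algebraic; the principal obstacle is bookkeeping, in particular verifying that the cross-term emerges in the asymmetric form $-\bm \mu_{XX'}\T\bm \Gamma_{XX'}\bm \xi$ rather than a symmetrized variant, and that every occurrence of $\bm \Sigma_X^{1/2}\bm \Sigma_{X'}^{-1}$ is routed through the eigendecomposition so that the resulting quadratic form becomes diagonal in the rotated coordinates.
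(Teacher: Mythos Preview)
Your proposal is correct and follows essentially the same route as the paper: expand the Gaussian log-likelihood ratio, apply the whitening-plus-rotation change of variable $\bm \xi=\bm U_{XX'}\T\bm \Sigma_X^{-1/2}(\bm q-\bm \mu_X)$, and reduce every term through the eigendecomposition $\bm \Sigma_X^{1/2}\bm \Sigma_{X'}^{-1}\bm \Sigma_X^{1/2}=\bm U_{XX'}\bm \Gamma_{XX'}\bm U_{XX'}\T$. The only cosmetic difference is that the paper first rewrites the second quadratic form around $\bm \mu_X$ (so both quadratics involve $\bm q-\bm \mu_X$) before transforming, whereas you transform first and combine afterward; the algebra and the key identities used are identical.
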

\begin{proof} See \Cref{proof:L(xi)}.
\end{proof}
Naturally, the same expressions hold for $\qtilde$ except that to define $\tilde{\bm U}_{\XX}, \tilde{\bm \Gamma}_{\XX}$ and $\tilde{\bm \mu}_{\XX}$ we are using \cref{eq:tilde-mu} and \cref{eq:tilde-Sigma} and corresponding expressions for $X'$. For the conditionally Gaussian case, in order to create a better level of $(\epsilon, \delta)$  privacy the noise design should reduce the difference of the means and covariances by adding noise. 
 In particular, the quadratic term in \cref{eq:L(xi)} becomes zero if $\bm \Gamma_{\XX}=\bm I$, which is the case when the noise covariances of the two hypotheses under $X$ and $X'$ are the same, and the remaining linear and constant terms go to zero if the difference between the mean vectors is zero. 
In the case of $\bm \Gamma_{\XX}=\bm I$, $Pr(L_{\XX}(\bm q)>\epsilon)$ can be computed in closed form as shown
\begin{corollary}\label{cor:equal_cova}
Assume $\bm \Sigma_{\X}\!=\!\bm \Sigma_{\X'}$, so that
$\bm \Gamma_{\XX}\!=\!\bm I$. Then:
\begin{align}
    L_{\XX'}(\bm \xi)\sim {\cal N}\left(\frac 1 2\|\bm \mu_{\XX}\|^2, \|\bm \mu_{\XX}\|^2\right).
\end{align}
and with $Q(v)=\frac 1 {\sqrt{2\pi}}\int_{v}e^{-\frac{u^2} 2}~\mathrm{d}u$, then:
\begin{align}\label{eq:Pr(L>e)-equal-cov-2}
Pr(L_{\XX}(\bm q)>\epsilon)=
Q\left(\frac{\epsilon -\frac{\|\bm \mu_{\XX}\|^2} 2}{\|\bm \mu_{\XX}\|}\right)
\end{align}
which is a monotonically increasing function of $\|\bm \mu_{\XX}\|^2$.
\end{corollary}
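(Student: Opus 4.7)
The plan is to specialize the general expression for $L(\bm \xi)$ given in \cref{eq:L(xi)} of \Cref{prop:log-likelihood-gauss} to the equal-covariance case and then read off the resulting Gaussian distribution directly. Setting $\bm \Gamma_{\XX}=\bm I$ makes three of the four terms collapse: $\ln|\bm \Gamma_{\XX}|=\ln|\bm I|=0$, the quadratic form $\frac{1}{2}\bm \xi\T(\bm \Gamma_{\XX}-\bm I)\bm \xi$ vanishes identically, and the remaining linear and constant terms simplify to $-\bm \mu_{\XX}\T\bm \xi$ and $\tfrac{1}{2}\|\bm \mu_{\XX}\|^2$ respectively. Hence $L_{\XX}(\bm \xi)=\tfrac{1}{2}\|\bm \mu_{\XX}\|^2-\bm \mu_{\XX}\T\bm \xi$ is an affine function of the standard Gaussian vector $\bm \xi\sim{\cal N}(\bm 0,\bm I)$.

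Next, I would invoke the standard fact that an affine function of a standard Gaussian is Gaussian, with mean equal to the constant term and variance equal to $\|\bm \mu_{\XX}\|^2$ (the squared Euclidean norm of the linear coefficient vector, since the covariance of $\bm \xi$ is the identity). This immediately gives $L_{\XX}(\bm \xi)\sim{\cal N}\!\left(\tfrac{1}{2}\|\bm \mu_{\XX}\|^2,\|\bm \mu_{\XX}\|^2\right)$, which is the first claim. By \Cref{prop:log-likelihood-gauss}, $Pr(L_{\XX}(\bm q)>\epsilon)=Pr(L_{\XX}(\bm \xi)>\epsilon)$, so standardizing the latter with the $Q$-function yields
\begin{align*}
Pr(L_{\XX}(\bm q)>\epsilon)=Q\!\left(\frac{\epsilon-\tfrac{1}{2}\|\bm \mu_{\XX}\|^2}{\|\bm \mu_{\XX}\|}\right),
\end{align*}
matching \cref{eq:Pr(L>e)-equal-cov-2}.

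Finally, for the monotonicity claim, I would set $a\triangleq\|\bm \mu_{\XX}\|>0$ and rewrite the $Q$-argument as $\epsilon/a - a/2$. Its derivative with respect to $a$ is $-\epsilon/a^2-1/2<0$ (assuming $\epsilon\geq 0$, which is the meaningful privacy regime), so the argument is strictly decreasing in $a$. Since $Q(\cdot)$ itself is strictly decreasing, the composition is strictly increasing in $a$, and therefore in $\|\bm \mu_{\XX}\|^2$.

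I do not expect any real obstacle: essentially all the heavy lifting is done by \Cref{prop:log-likelihood-gauss}, and the corollary is a direct substitution plus a one-line affine-Gaussian argument. The only minor care is to state the monotonicity precisely (strictly increasing, and over the relevant range of $\epsilon$), and to note that the equivalence $Pr(L_{\XX}(\bm q)>\epsilon)\equiv Pr(L_{\XX}(\bm \xi)>\epsilon)$ is exactly what \Cref{prop:log-likelihood-gauss} guarantees, so the result holds for the original query $\bm q$ and not only for the whitened variable $\bm \xi$.
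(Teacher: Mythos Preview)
Your proposal is correct and matches the paper's intended approach: the paper presents this corollary as an immediate consequence of \Cref{prop:log-likelihood-gauss} without a separate proof, and your argument---substituting $\bm\Gamma_{\XX}=\bm I$ into \cref{eq:L(xi)}, observing that $L_{\XX}(\bm\xi)$ reduces to an affine function of $\bm\xi\sim\mathcal N(\bm 0,\bm I)$, and reading off the Gaussian law---is exactly the computation the corollary is summarizing. The monotonicity argument you give is also the natural one and is consistent with the paper's later use of this fact.
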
 

However, more generally when $\bm \Sigma_{\X}\!\neq \!\bm \Sigma_{\X'}$, we use the Chernoff bound for $Pr(L_{\XX}(\bm q)>\epsilon)$ to help evaluate the $(\epsilon,\delta)$ privacy levels. The same expression can also be used as a bound for $Pr(L_{\XX}(\qtilde)>\epsilon)$ using $\tilde{\bm \mu}_{\XX}$, and $\tilde{\bm U}_{\XX}, \tilde{\bm \Gamma}_{\XX}$ instead: 
\begin{lem}\label{lem:chernoff-bound}
For all $s>\max(1,\gamma_1)$ where $\gamma_1=\lambda_{\max}(\bm\Sigma_{\X}^{1/2}\bm \Sigma_{\X'}^{-1}\bm\Sigma_{\X}^{1/2})$, it holds:
\begin{align}\label{eq:chernoff}
    Pr(L_{\XX}(\bm q)>\epsilon)&\leq 
    \frac{(s\!-\!1)^{\frac{k}{2}}}
{|\bm \Gamma_{\XX}|^{\frac{1}{2(s\!-\!1)}}|s\bm I-\bm \Gamma_{\XX}|^{1/2}}\\
&\times e^{-\frac{\epsilon}{(s\!-\!1)}+\frac{s}{2(s\!-\!1)}{\bm \mu}_{\XX}\T(s\bm I-\bm \Gamma_{\XX})^{-1}\bm \Gamma_{\XX}{\bm \mu}_{\XX}}\nonumber
\end{align}
\end{lem}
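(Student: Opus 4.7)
The plan is to apply the Chernoff bound to $L_{\XX}(\bm q)$ and then use \Cref{prop:log-likelihood-gauss} to work with $L(\bm \xi)$ for $\bm \xi\sim\mathcal{N}(\bm 0,\bm I)$, which is a standard quadratic-plus-linear form in a standard Gaussian vector whose moment generating function (MGF) is available in closed form. Specifically, for any $t>0$ I would write
\begin{align}
Pr(L_{\XX}(\bm q)>\epsilon)\;\leq\;e^{-t\epsilon}\,\mathbb{E}\!\left[e^{tL(\bm \xi)}\right],
\end{align}
substitute the explicit expression for $L(\bm \xi)$ from \cref{eq:L(xi)}, and separate the deterministic factors from the random quadratic form. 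This reduces the problem to computing the MGF of a shifted Gaussian quadratic form, which is the main technical ingredient.

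Next, I would apply the identity $\mathbb{E}[e^{\frac{1}{2}\bm \xi\T\bm A\bm \xi+\bm b\T\bm\xi}]=|\bm I-\bm A|^{-1/2}\exp\!\bigl(\tfrac{1}{2}\bm b\T(\bm I-\bm A)^{-1}\bm b\bigr)$, valid whenever $\bm I-\bm A\succ\bm 0$, with the identifications $\bm A=t(\bm \Gamma_{\XX}-\bm I)$ and $\bm b=-t\bm \Gamma_{\XX}\bm\mu_{\XX}$. The positive-definiteness condition $\bm I-t(\bm\Gamma_{\XX}-\bm I)\succ\bm 0$ is exactly $(1+t)>t\gamma_i$ for every eigenvalue $\gamma_i$ of $\bm\Gamma_{\XX}$; reparametrizing $t=1/(s-1)$ turns this into $s>\gamma_1=\lambda_{\max}(\bm\Sigma_{\X}^{1/2}\bm\Sigma_{\X'}^{-1}\bm\Sigma_{\X}^{1/2})$, while the Chernoff requirement $t>0$ gives $s>1$, jointly yielding the stated domain $s>\max(1,\gamma_1)$.

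The bookkeeping step is to simplify $\bm I-t(\bm \Gamma_{\XX}-\bm I)=(1+t)\bm I-t\bm\Gamma_{\XX}$ under $t=1/(s-1)$, which factors as $\tfrac{1}{s-1}(s\bm I-\bm\Gamma_{\XX})$. This immediately yields the determinant prefactor $(s-1)^{k/2}/|s\bm I-\bm\Gamma_{\XX}|^{1/2}$ and the inverse $(s-1)(s\bm I-\bm\Gamma_{\XX})^{-1}$. Combining the quadratic MGF contribution $\tfrac{1}{2}\bm b\T(\bm I-\bm A)^{-1}\bm b$ with the deterministic quadratic piece $\tfrac{t}{2}\bm\mu_{\XX}\T\bm\Gamma_{\XX}\bm\mu_{\XX}$ coming from $L(\bm\xi)$, and using the diagonal commuting identity
\begin{align}
\bm I+\bm\Gamma_{\XX}(s\bm I-\bm\Gamma_{\XX})^{-1}=s(s\bm I-\bm\Gamma_{\XX})^{-1},
\end{align}
consolidates everything into the single exponent $\tfrac{s}{2(s-1)}\bm\mu_{\XX}\T(s\bm I-\bm\Gamma_{\XX})^{-1}\bm\Gamma_{\XX}\bm\mu_{\XX}$. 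The constant $-\tfrac{1}{2}\ln|\bm\Gamma_{\XX}|$ in $L(\bm\xi)$ multiplied by $t$ produces the factor $|\bm\Gamma_{\XX}|^{-1/(2(s-1))}$, and the $-t\epsilon$ term gives $e^{-\epsilon/(s-1)}$. Assembling these pieces reproduces the claimed bound exactly.

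The only nontrivial obstacle is recognizing the algebraic collapse $\bm I+\bm\Gamma_{\XX}(s\bm I-\bm\Gamma_{\XX})^{-1}=s(s\bm I-\bm\Gamma_{\XX})^{-1}$, which is what merges the two separate quadratic-in-$\bm\mu_{\XX}$ contributions into the single factor of $s/(2(s-1))$ seen in the statement; everything else is mechanical application of the Gaussian MGF formula and the Chernoff bound, with careful tracking of the parameter change $t\leftrightarrow s$.
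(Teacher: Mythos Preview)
Your proposal is correct and follows essentially the same route as the paper's proof: both apply the Chernoff bound to $L(\bm\xi)$ from \Cref{prop:log-likelihood-gauss} with parameter $t=\overline{s}=1/(s-1)$ and then evaluate the Gaussian expectation of the resulting exponential of a quadratic form. The only cosmetic difference is that the paper computes that expectation by explicitly completing the square in the integral (introducing $\hat{\bm\mu}=-(s\bm I-\bm\Gamma)^{-1}\bm\Gamma\bm\mu$) rather than invoking the MGF identity you cite, but the algebra and the resulting bound are identical.
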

\begin{proof}
See \Cref{app:chernoff-bound}.
\end{proof}
In order to understand the trends in the bound, we can specify it for some value of $s$ in a corollary of~\Cref{lem:chernoff-bound}:
\begin{corollary}\label{cor:chernoff-bound}
For $s\gg \max(1,\gamma_1)$, 
\begin{align}
    Pr(L_{\XX}(\bm q)>\epsilon)&\lessapprox  \frac{1}{|\bm \Gamma_{\XX}|^{\frac{1}{2s}}} e^{-\frac{\epsilon}{s}} e^{\frac{\bm \mu_{\XX}\T\bm \Gamma_{\XX'}\bm \mu_{\XX}} {2s}} \\
    & = e^{\frac{1}{2s} \left[\bm \mu_{\XX}\T\bm \Gamma_{\XX'}\bm \mu_{\XX} - \ln|\bm \Gamma_{\XX}| \right] }e^{-\frac{\epsilon}{s}} 
\end{align}
\end{corollary}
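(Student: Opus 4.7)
The plan is to obtain \Cref{cor:chernoff-bound} by specializing the bound in \Cref{lem:chernoff-bound} to the regime $s\gg\max(1,\gamma_1)$ and expanding each factor to leading order in $1/s$. The observation driving everything is that, since $s$ strictly dominates every eigenvalue of $\bm \Gamma_{\XX}$ (recall $\gamma_1=\lambda_{\max}(\bm\Sigma_{\X}^{1/2}\bm \Sigma_{\X'}^{-1}\bm\Sigma_{\X}^{1/2})$ controls the whole spectrum), both $s\bm I-\bm \Gamma_{\XX}\approx s\bm I$ and $s-1\approx s$ hold, and every other simplification follows from these two.

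First I would collapse the prefactor. Writing $|s\bm I-\bm \Gamma_{\XX}|^{1/2}=s^{k/2}\prod_i(1-\gamma_i/s)^{1/2}$ and $(s-1)^{k/2}=s^{k/2}(1-1/s)^{k/2}$, the two $s^{k/2}$ factors cancel and the remaining products both tend to $1$, so the entire ratio $(s-1)^{k/2}/|s\bm I-\bm \Gamma_{\XX}|^{1/2}$ is $1+O(1/s)$. What survives is the determinantal term $|\bm \Gamma_{\XX}|^{-1/[2(s-1)]}$, which to leading order equals $|\bm \Gamma_{\XX}|^{-1/(2s)}$.

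Next I would treat the exponent. Using the Neumann expansion $(s\bm I-\bm \Gamma_{\XX})^{-1}=s^{-1}(\bm I-\bm \Gamma_{\XX}/s)^{-1}=s^{-1}\bm I+O(s^{-2})$, I obtain $(s\bm I-\bm \Gamma_{\XX})^{-1}\bm \Gamma_{\XX}\approx \bm \Gamma_{\XX}/s$. Combined with $s/[2(s-1)]\approx 1/2$, the quadratic form collapses to $\bm \mu_{\XX}\T\bm \Gamma_{\XX}\bm \mu_{\XX}/(2s)$, and similarly $\epsilon/(s-1)\approx \epsilon/s$. Assembling the pieces gives the first displayed line of the corollary, and rewriting $|\bm \Gamma_{\XX}|^{-1/(2s)}=\exp[-\ln|\bm \Gamma_{\XX}|/(2s)]$ so it can be absorbed into the single exponential produces the second line.

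No genuine obstacle is anticipated: the corollary is a straightforward $1/s$ expansion of \Cref{lem:chernoff-bound}. The only care required is tracking which terms are dropped at each step so that a consistent leading-order expression results, and reading the notation $\bm \Gamma_{\XX'}$ appearing in the statement as $\bm \Gamma_{\XX}$, since the matrix produced by the limit is the one introduced in \Cref{prop:log-likelihood-gauss}.
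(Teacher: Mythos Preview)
Your proposal is correct and matches the paper's intent: the corollary is stated there without an explicit proof, simply as the large-$s$ specialization of \Cref{lem:chernoff-bound}, and your $1/s$ expansion of each factor is exactly the computation that produces it. Your observation that $\bm\Gamma_{\XX'}$ in the displayed bound should be read as $\bm\Gamma_{\XX}$ is also right, as confirmed by the paper's own subsequent use of the quantity $\tilde{\bm \mu}_{\XX}\T\tilde{\bm \Gamma}_{\XX}\tilde{\bm \mu}_{\XX}-\ln|\tilde{\bm \Gamma}_{\XX}|$ as the surrogate objective.
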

The bounds suggest that, more generally, $Pr(L_{\XX}(\bm q)>\epsilon)$ increases monotonically, as $\bm \mu_{\XX}\T\bm \Gamma_{\XX'}\bm \mu_{\XX}-\ln|\bm \Gamma_{\XX}|$ increases, which is consistent with the case where the expression is exact and $\bm \Gamma_{\XX'}=\bm I$. This metric will be leveraged in our optimal design. In particular, we can assert that for the conditionally Gaussian case we can use the following $(\epsilon, \delta)$ bound:
\begin{corollary}\label{cor:Gaussian_sensitivity}
Let the Gaussian sensitivity be defined as:
\begin{align}
    \Delta_{G}=\sup_{\X}\sup_{\X'\in {\cal X}_{\X}^{(1)}}
    \frac{1}{2}\tilde{\bm \mu}_{\XX}\T\tilde{\bm \Gamma}_{\XX'}\tilde{\bm \mu}_{\XX}-\frac{1}{2}\ln|\tilde{\bm \Gamma}_{\XX}|
\end{align}
The following trend is a $(\epsilon, \delta)$ bound for the privacy loss:
\begin{align}
    \delta_{\qtilde}^{\epsilon}\leq \delta=
   e^{\frac{\Delta_{G}-\epsilon}{s}}.  
\end{align}
\end{corollary}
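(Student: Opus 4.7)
The plan is to derive the stated bound as a direct consequence of \Cref{cor:chernoff-bound}, applied to the published (noisy) query $\qtilde$ rather than to the raw query $\bm q$, and then take a worst-case supremum over neighboring class pairs.

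First, I would note that since $\bm q \mid X \sim {\cal N}(\bm\mu_{\X},\bm\Sigma_{\X})$ and $\bm\eta\mid X\sim{\cal N}(\bm\mu_{\bm\eta\mid X},\bm\Sigma_{\bm\eta\mid X})$ are independent Gaussians by construction of the additive mechanism, the published query is itself conditionally Gaussian with $\qtilde\mid X\sim{\cal N}(\tilde{\bm\mu}_{\X},\tilde{\bm\Sigma}_{\X})$, where $\tilde{\bm\mu}_\X$ and $\tilde{\bm\Sigma}_\X$ are defined in \cref{eq:tilde-mu,eq:tilde-Sigma}. Hence the hypothesis of \Cref{prop:log-likelihood-gauss} and of \Cref{lem:chernoff-bound} holds verbatim for $\qtilde$ in place of $\bm q$, provided one substitutes $\tilde{\bm\mu}_{\XX}$, $\tilde{\bm U}_{\XX}$, $\tilde{\bm\Gamma}_{\XX}$ for the corresponding raw-query quantities. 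Applying \Cref{cor:chernoff-bound} (the large-$s$ form of the Chernoff bound) to $\qtilde$ therefore yields
\begin{align}
Pr(L_{\XX}(\qtilde)>\epsilon) \lessapprox
\exp\!\left(\tfrac{1}{2s}\bigl[\tilde{\bm\mu}_{\XX}\T\tilde{\bm\Gamma}_{\XX}\tilde{\bm\mu}_{\XX}-\ln|\tilde{\bm\Gamma}_{\XX}|\bigr]\right)\exp\!\left(-\tfrac{\epsilon}{s}\right).
\end{align}

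Next, because \Cref{defn:e-d-privacy} requires the tail bound to hold uniformly over every edge $(X,X')\in{\cal E}$, I would take the supremum of the right-hand side over all such pairs. The exponent on the right factorizes into an $\epsilon$-term that is independent of $(X,X')$ and a data-dependent term that is monotonically increasing in the scalar quantity $\tilde{\bm\mu}_{\XX}\T\tilde{\bm\Gamma}_{\XX}\tilde{\bm\mu}_{\XX}-\ln|\tilde{\bm\Gamma}_{\XX}|$. The supremum of this scalar over neighbors is, by the defining expression in the statement, exactly $2\Delta_G$. Substituting back, the $\frac{1}{2s}$ prefactor combines with $2\Delta_G$ to give $\Delta_G/s$, producing
\begin{align}
\sup_{X\in{\cal X}}\sup_{X'\in{\cal X}^{(1)}_{X}}Pr(L_{\XX}(\qtilde)>\epsilon) \leq \exp\!\left(\tfrac{\Delta_G-\epsilon}{s}\right),
\end{align}
which, after invoking the reduction to one-sided tail noted in the discussion following \Cref{defn:e-d-privacy}, is exactly the claim $\delta_{\qtilde}^{\epsilon}\leq e^{(\Delta_G-\epsilon)/s}$.

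The only nontrivial step is the monotonicity observation that justifies pulling the supremum inside the exponential: the factor $\exp(-\epsilon/s)$ is a positive constant in $(X,X')$, and the exponential is strictly increasing in its argument, so $\sup$ acts only on the bracketed combination of the quadratic form and log-determinant. I would argue this in one line and then identify the resulting quantity with $2\Delta_G$ as defined. No obstacle of substance arises beyond carefully matching the $\tilde{\cdot}$ notation from \cref{eq:tilde-mu,eq:tilde-Sigma} into the Chernoff bound of \Cref{cor:chernoff-bound} and remembering that the proposition is valid for any $s\gg\max(1,\tilde{\gamma}_1)$, which leaves $s$ as the free parameter to be optimized in subsequent design.
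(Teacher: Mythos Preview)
Your proposal is correct and follows essentially the same route the paper takes: the corollary is presented without an explicit proof, being an immediate specialization of \Cref{cor:chernoff-bound} to the published query $\qtilde$ (using the tilde parameters of \cref{eq:tilde-mu,eq:tilde-Sigma}) followed by a supremum over neighboring pairs, exactly as you outline. The only cosmetic slip is that the quadratic form in the statement uses $\tilde{\bm\Gamma}_{\XX'}$ rather than $\tilde{\bm\Gamma}_{\XX}$, but the paper itself is inconsistent on this subscript and the argument is unaffected.
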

The bound is simple but quite loose; however, it does help to identify the worst-case scenario which, in turn, helps to optimize the noise parameters. 

\subsection{\response{Optimal design of additive noise parameters}}
In this section, we propose an algorithm to optimize the parameters of the additive noise mechanism algorithm $\mathbbm{A}_{\mathbbm{Q}}(\bm d)$.  
Considering the expression of mean and covariance of $\qtilde$ in \cref{eq:tilde-mu,eq:tilde-Sigma} the expression we derived for $\bm q$
can be applied to derive $Pr(L_{\XX}(\qtilde)>\epsilon)$ by defining the corresponding vector:
\begin{align}
    \tilde{\bm \mu}_{\XX}\triangleq
     \tilde{\bm U}_{\XX}\tilde{\bm \Sigma}_{\X}^{-\frac 1 2}(\tilde{\bm \mu}_{\X'}\!-\!\tilde{\bm \mu}_{\X}), \label{eq:define_mu_XX_prime}
\end{align} 
and define the unitary matrix $\tilde{\bm U}_{\XX}$ and diagonal matrix $\tilde{\bm \Gamma}_{\XX}$ through the following eigenvalue decomposition:
\begin{align}
    \tilde{\bm U}_{\XX}\tilde{\bm \Gamma}_{\XX}\tilde{\bm U}_{\XX}\T\triangleq 
    \tilde{\bm \Sigma}_{\X}^{\frac 1 2}\tilde{\bm \Sigma}_{\X'}^{-1}\tilde{\bm \Sigma}_{\X}^{\frac 1 2}. \label{eq:Gamma_XX_prime}
\end{align}
From the analysis in the previous section, and \Cref{cor:chernoff-bound,cor:Gaussian_sensitivity}, it seems that a good surrogate metric for achieving the best $(\epsilon,\delta)$ privacy is:
\begin{align}
\!\!\!\!\tilde{\bm \mu}_{\XX}\T\tilde{\bm \Gamma}_{\XX'}\tilde{\bm \mu}_{\XX} - \ln|\tilde{\bm \Gamma}_{\XX}| = (\tilde{\bm \mu}_{\X'}- \tilde{\bm \mu}_{\X})\T \tilde{\bm \Sigma}_{\X'}^{-1} \!(\tilde{\bm \mu}_{\X'} \!-\! \tilde{\bm \mu}_{\X})\! -\! \ln|\tilde{\bm \Gamma}_{\XX}|
\end{align}
In fact, with $s\gg \max (\gamma_1,1)$  for all $\tilde{\bm \Gamma}_{\XX}$:
\begin{align}
    \log \delta_{\qtilde}^{\epsilon}
    \lessapprox& \sup_{\X\in {\cal X}}\sup_{\X'\in {\cal X}_{\X}^{(1)}} \!\left(
    -\frac{\epsilon}{2s}\!+\!\frac{\tilde{\bm \mu}_{\XX}\T\tilde{\bm \Gamma}_{\XX}\tilde{\bm \mu}_{\XX}\!-\!\ln|\tilde{\bm \Gamma}_{\XX}|}{s}
    \right)\nonumber\\
    =&\!-\!\frac{\epsilon}{2s}
    \!+\!\frac 1 s \sup_{\X\in {\cal X}}\sup_{\X\in {\cal X}_{\X}^{(1)}}(\tilde{\bm \mu}_{\XX}\T\tilde{\bm \Gamma}_{\XX}\tilde{\bm \mu}_{\XX}\!-\!\ln|\tilde{\bm \Gamma}_{\XX}|).
\end{align}
We therefore seek to find the noise means $\bm \mu_{\bm \eta|\X}$ and covariances $\bm \Sigma_{\bm \eta|\X}$ that solve the following problem:
\begin{align}
    \min_{\bm \mu_{\eta|\X},\bm \Sigma_{\bm \eta|\X},\forall X}& \left(\sup_{\X\in {\cal X}}\sup_{\X'\in {\cal X}_X^{(1)}}\tilde{\bm \mu}_{\XX}\T\tilde{\bm \Gamma}_{\XX}\tilde{\bm \mu}_{\XX}\!-\!\ln|\tilde{\bm \Gamma}_{\XX}|\right),\nonumber\\
    \mbox{subj. to}&~ \tilde{ \bm \mu}_{\X} = \bm \mu_{\X}+\bm \mu_{\bm \eta|X}, 
    \tilde{\bm \Sigma}_{\X} = \bm \Sigma_{\X}+\bm \Sigma_{\bm \eta|X}, \\
    \rho & = \frac 1 k \mathrm{Tr}(\bm \Sigma_{\bm \eta|X})+ \frac 1 k \|\mu_{\bm{\eta}|X}\|^2,
    \label{eq:opt-noise}
\end{align}
For the time being, we assume that noise means $\bm \mu_{\bm \eta|\X} = \bm{0}$.
Expanding the terms within the $\sup$, we get,
\begin{align}
\tilde{\bm \mu}_{\XX}\T\tilde{\bm \Gamma}_{\XX}\tilde{\bm \mu}_{\XX}\!-\!\ln|\tilde{\bm \Gamma}_{\XX}|&=  (\tilde{\bm \mu}_{\X'} \!-\!\tilde{\bm \mu}_{\X})\T \bm \tilde{\bm \Sigma}_{\X'}^{-1} (\tilde{\bm \mu}_{\X'} \! - \!\tilde{\bm \mu}_{\X})\!-\!\ln \frac{| \tilde{\bm \Sigma}_{\X}  |}{| \tilde{\bm \Sigma}_{\X'}  |}\nonumber\\
& =  (\tilde{\bm \mu}_{\X'}\!-\!\tilde{\bm \mu}_{\X})\T ( \bm \Sigma_{\X'}\!+\!\bm \Sigma_{\bm \eta|\X'}   )^{-1} (\tilde{\bm \mu}_{\X'}\!-\! \tilde{\bm \mu}_{\X}) \nonumber \\
&\quad - \ln \frac{|\bm \Sigma_{\X}+\bm \Sigma_{\bm \eta|\X}  |}{|\bm \Sigma_{\X'}+\bm \Sigma_{\bm \eta|\X'}  |} \label{eq:surrogate_metric}
\end{align}
The surrogate metric in \cref{eq:surrogate_metric} is convex in $\tilde{\bm \Sigma}_{\X}$  when all the other terms are kept constant. It is also  convex in the variable $\tilde{\bm \Sigma}_{\X'}^{-1}$. 
We continue the discussion by setting the optimization variable as $\bm{A}_{\X} \triangleq \tilde{\bm \Sigma}_{\X}^{-1}$ and later estimate $\bm \Sigma_{\bm \eta|\X}$ by subtracting  ${\bm \Sigma}_{\X}$ from $\tilde{\bm \Sigma}_{\X}$  and projecting onto the set of positive semi-definite (PSD) matrices. 

Now, let us focus on the overall cost of the optimization problem in \cref{eq:opt-noise}. 
 The  terms are $\forall X \in \mathcal{X}, ~ X' \in {\cal X}_X^{(1)}$:
\begin{align}
& g_{\XX} 
\triangleq (\tilde{\bm \mu}_{\X'} - \tilde{\bm \mu}_{\X})\T \bm{A}_{\X'} (\tilde{\bm \mu}_{\X'} - \tilde{\bm \mu}_{\X}) - \ln
\frac{| \bm{A}_{\X'}  |}{| \bm{A}_{\X}  |}.\label{eq: g_XX}
\end{align}
The cost associated with the minimization then becomes 
\begin{align}
J = \sup_{\X   } \sup_{  \X' \in  {\cal X}_X^{(1)}   }    g_{\XX} 
\end{align}
{Note that there is no symmetry in the surrogate metric, $ g_{\XX}   
\neq  g_{\X'\!\X} $ and therefore, we need to consider them separately.}
One can think of a block-coordinate descent algorithm in order to reach one of the minima (local or global).  The algorithm we propose is given in \Cref{alg: A_update}. 
\begin{algorithm}
    \KwResult{$\bm{A}^{\star}_{\X}, ~~ \forall X \in {\cal X}$}
    Initialize $\tilde{\bm \Sigma}_{\X}^{(0)}  = \bm{\Sigma}_{\X} + (\rho/k) \bm{I}$, $\alpha_{X}^{(0)}=0.1 ~\forall X $ and $\bm{A}_{\X}^{(0)} = \left(\tilde{\bm \Sigma}_{\X}^{(0)}\right)^{-1}$ for all $X \in {\cal X}, t=0$.
    
    \While{ $\alpha_{\X_{\bullet}}^{(t)} >0$}
    {
    Identify  pair of labels: 
    $$(X_{\bullet}  ,X'_{\bullet}) \!=\! \arg_{ (X, X') } \sup_{\X   } \!\sup_{  \X' \in  {\cal X}_X^{(1)}   } \!  \{ g_{\X \X'}(t)\}.$$
    
    Compute $J^{(t)} = \! g_{\X_{\bullet}\X_{\bullet}'}(t)$ using  $(\bm{A}_{\X_{\bullet}}^{(t)}, \! \bm{A}_{\X_{\bullet}}^{(t)})$ in \cref{eq: g_XX}.
    
	Update $\bm{A}_{\X_{\bullet}}^{(t+1)} $ with $\alpha_{\X_{\bullet}}^{(t)}$ in \Cref{lem:alpha_t}:
	\begin{align}
    	&\hspace{-3pt}\! \bm{A}_{\X_{\bullet} }^{(t+1)}\! \leftarrow\! \bm{A}_{\X_{\bullet}}^{(t)} - \alpha_{\X_{\bullet}}^{(t)} \left[\nabla_{ \bm{A}_{\X_{\bullet} }   }  g_{\X_{\bullet} \X_{\bullet}}(t) \right], \label{eq:A_update}
	\end{align}
	where
	$$
        \nabla_{ \bm{A}_{\X_{\bullet} }}   g_{\X_{\bullet}\X_{\bullet}}(t) \!=\!\! (\tilde{\bm \mu}_{ \X_{\bullet}}^{(t)} \!-\! \tilde{\bm \mu}_{\X_{\bullet}}^{(t)}) (\tilde{\bm \mu}_{ \X_{\bullet}}^{(t)} \!-\! \tilde{\bm \mu}_{\X_{\bullet}}^{(t)})\T - (\bm{A}_{\X_{\bullet}}^{(t)})^{-1}.
        $$
	
	Project $\bm{A}_{\X_{\bullet}(t) }(t+1)$ onto the set of positive-semi definite matrices:
	\begin{align}
	     \bm{A}_{\X_{\bullet}}^{(t+1)} \leftarrow \text{Proj}_{{\cal S}_k^{+}} \{ \bm{A}_{\X_{\bullet} }^{(t+1)} \}.
	\end{align}
	 $$ t = t+1$$
    }

    \caption{Estimation of $\bm{A}_{\X} = \tilde{\bm \Sigma}_{\X}^{-1}, \forall X \in {\cal X}$ } \label{alg: A_update}
\end{algorithm}
Having obtained an $\bm{A}_{\X}$, one can determine the noise covariance matrix $\bm \Sigma_{\bm \eta|\X}$ such that the constraint $\mathrm{Tr}(\bm \Sigma_{\bm \eta|X})$ is satisfied as follows: 
\begin{align}
 \bm \Sigma_{\bm \eta|X}  =  \frac{\rho ~ \text{Proj}_{{\cal S}_k^{+}} \{ \left(\bm{A}_{\X}^{\star}\right)^{-1} - \bm{\Sigma}_{\X} \}}{\mathrm{Tr} \left(  \text{Proj}_{{\cal S}_k^{+}} \{ \left(\bm{A}_{\X}^{\star}\right)^{-1} - \bm{\Sigma}_{\X} \}   \right)} \label{dpnoisecov}
\end{align}
where $\text{Proj}_{{\cal S}_k^{+}}(\bm X)$ is the projection of the matrix $\bm X$ onto the set of positive semi-definite matrices.
We know that the optimization problem of interest is a non-convex one and what we seek is an improvement over the standard Gaussian mechanism, which amounts to adding white noise to the query response. Our algorithm initializes the noise covariance matrices to be white. Then, the minimization we perform with respect to different variables is such that there is always a \emph{decrease} in the cost function. For this, it is pertinent that we choose the appropriate step-size $\alpha_{\X_{\bullet}(t) }$ in order to converge to a local minimum. 
\begin{lem}
  \label{lem:alpha_t}
  Let  $J^{(t)} = g_{\X_{\bullet} \X_{\bullet}}(t) $ and the variable to be updated $\bm{A}_{\X_{\bullet} }^{ (t)} $. 	If the step size $\alpha_{\X_{\bullet}^{(t)} }$ at iteration $t$ satisfies:
  \begin{align}\label{eq:alpha}
  \alpha_{\X_{\bullet}}^{(t)} &\leq ~~\max \left\{ 0,\min_{X \in {\cal X}_{ \X_{\bullet}(t)  }^{(1)}    } \{   b_{\X_{\bullet}\X }(t),   d_{\X_{\bullet}\X }(t)   \} \right\} ~~~~
  \end{align}
   then, the update to $\bm{A}_{\X_{\bullet} }^{ (t)} $ given in \cref{eq:A_update} will lead to $J^{(t)} - J^{(t+1)} \geq 0 $.
The constants in \cref{eq:alpha} are,
  \begin{align}
 \hspace{-2pt} b_{\X_{\bullet}\X }(t) &= \frac{  \tilde{\bm{v}}_{\X\X_{\bullet} }\T  \bm{A}_{\X_{\bullet} }^{ (t)} \tilde{\bm{v}}_{\X\X_{\bullet} }\! - \!  \tilde{\bm{v}}_{\X\X_{\bullet} }\T \bm{A}_{\X }^{(t)} \tilde{\bm{v}}_{\X \X_{\bullet} } }{  \mathrm{Tr} \left( (\bm{A}_{\X_{\bullet}}^{(t)})^{-2}  \right) -  \tilde{\bm{v}}_{\X\X_{\bullet} }^{\T}  (\bm{A}_{\X_{\bullet} }^{(t)})^{-1}  \tilde{\bm{v}}_{\X\X_{\bullet} }     } \label{eq:b_xx}\\
  & + \frac{\! \ln |\bm{A}_{\X_{\bullet}}^{(t)}   | +   \ln |\bm{A}_{\X }^{(t)}   |\!-\!2  \ln  |\bm{A}_{\X_{\bullet} }^{ (t)}  |   } {\mathrm{Tr} \left( (\bm{A}_{\X_{\bullet}}^{(t)})^{-2} \right) -  \tilde{\bm{v}}_{\X\X_{\bullet} }^{\T}  (\bm{A}_{\X_{\bullet}}^{(t)})^{-1}  \tilde{\bm{v}}_{\X\X_{\bullet} }       },  \nonumber\\
 \hspace{-2pt}  d_{\X_{\bullet}\X }(t)  &= \frac{ \tilde{\bm{v}}_{\X\X_{\bullet} }\T  \bm{A}_{\X_{\bullet} }^{(t)} \tilde{\bm{v}}_{\X\X_{\bullet} } \!  + \! \ln |\bm{A}_{\X_{\bullet}}^{(t)}  | \!-\!   \ln |\bm{A}_{\X }^{(t)} |        }{\tilde{\bm{v}}_{\X\X_{\bullet} }\T (\bm{A}_{\X_{\bullet}}^{(t)})^{-1} \tilde{\bm{v}}_{\X\X_{\bullet} }\! -\! \mathrm{Tr} \left( (\bm{A}_{\X_{\bullet}}^{(t)})^{-2}   \right) \!-\!  ( \tilde{\bm{v}}_{\X\X_{\bullet} }\T \tilde{\bm{v}}_{\X\X_{\bullet} } )^2     }, 
  \end{align}
 where $\tilde{\bm{v}}_{\XX } = (\tilde{\bm \mu}_{ \X' } \!- \!\tilde{\bm \mu}_{\X})$.
\end{lem}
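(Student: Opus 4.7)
The plan is to parameterize the update as $\bm A_{\X_\bullet}(\alpha) = \bm A_{\X_\bullet}^{(t)} - \alpha \bm G$, with $\bm G \triangleq \tilde{\bm v}_{\X_\bullet \X_\bullet'}\tilde{\bm v}_{\X_\bullet \X_\bullet'}\T - (\bm A_{\X_\bullet}^{(t)})^{-1}$ the gradient that drives \Cref{alg: A_update}, and then to track how every affected term $g_{\XX'}(t+1)$ from \cref{eq: g_XX} responds.  Since only $\bm A_{\X_\bullet}$ changes, and since $g_{\XX'}$ depends on $\bm A_{\X'}$ through both the quadratic form and the log-determinant but on $\bm A_{\X}$ only through $\ln|\bm A_{\X}|$, the only pairs whose value is altered are (i) $(\X,\X_\bullet)$ for $\X \in {\cal X}_{\X_\bullet}^{(1)}$, which generate the $d_{\X_\bullet \X}(t)$ constraints, and (ii) $(\X_\bullet,\X)$ for $\X \in {\cal X}_{\X_\bullet}^{(1)}$, which generate the $b_{\X_\bullet \X}(t)$ constraints.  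All other pairs trivially satisfy $g_{\XX'}(t+1) = g_{\XX'}(t) \leq J^{(t)}$, so guaranteeing $J^{(t+1)} \leq J^{(t)}$ reduces to enforcing $g_{\cdot\cdot}(t+1) \leq J^{(t)}$ on these two families.

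I then impose each constraint as an inequality and solve for $\alpha$.  The quadratic form $\tilde{\bm v}\T \bm A_{\X_\bullet}(\alpha) \tilde{\bm v}$ is exactly affine in $\alpha$.  The log-determinant I handle via the factorization $\ln|\bm A_{\X_\bullet}^{(t)} - \alpha \bm G| = \ln|\bm A_{\X_\bullet}^{(t)}| + \ln|\bm I - \alpha(\bm A_{\X_\bullet}^{(t)})^{-1}\bm G|$ together with the first-order expansion $\ln|\bm I - \alpha\bm M| = -\alpha \mathrm{Tr}(\bm M) + O(\alpha^2)$ near $\alpha=0$; substituting the explicit form of $\bm G$ gives
\[
\mathrm{Tr}\bigl((\bm A_{\X_\bullet}^{(t)})^{-1}\bm G\bigr) = \tilde{\bm v}_{\X_\bullet \X_\bullet'}\T(\bm A_{\X_\bullet}^{(t)})^{-1}\tilde{\bm v}_{\X_\bullet \X_\bullet'} - \mathrm{Tr}\bigl((\bm A_{\X_\bullet}^{(t)})^{-2}\bigr).
\]
Plugging this back and using $J^{(t)} = g_{\X_\bullet \X_\bullet'}(t)$ reduces both $g_{\X_\bullet \X}(t+1)\leq J^{(t)}$ and $g_{\X \X_\bullet}(t+1)\leq J^{(t)}$ to linear inequalities in $\alpha$; rearranging each one produces precisely the rational expressions $b_{\X_\bullet \X}(t)$ and $d_{\X_\bullet \X}(t)$ stated in the lemma.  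The minimum over $\X \in {\cal X}_{\X_\bullet}^{(1)}$, intersected with $\alpha \geq 0$ via the outer $\max\{0,\cdot\}$, recovers \cref{eq:alpha} and yields a valid descent step $J^{(t+1)}\leq J^{(t)}$.

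The main obstacle is the non-linearity of the log-determinant.  On the $(\X_\bullet,\X)$ side the affected term enters $g$ as $+\ln|\bm A_{\X_\bullet}|$, so concavity of $\ln|\cdot|$ on the positive-definite cone makes the first-order expansion a valid upper bound and the resulting $\alpha$ constraint automatically conservative.  On the $(\X,\X_\bullet)$ side the log-determinant carries a minus sign and concavity points the wrong way; here I would invoke the rank-one-plus-scaled-identity structure of $\bm G$ together with the matrix determinant lemma to evaluate $|\bm I - \alpha(\bm A_{\X_\bullet}^{(t)})^{-1}\bm G|$ in closed form, or alternatively restrict $\alpha$ so that the discarded quadratic remainder $-\tfrac{\alpha^2}{2}\mathrm{Tr}\bigl(((\bm A_{\X_\bullet}^{(t)})^{-1}\bm G)^2\bigr)$ is dominated by the linear decrease contributed by the quadratic form.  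This is the only non-algebraic step, and it is what forces $\alpha_{\X_\bullet}^{(t)}$ to be upper-bounded rather than free.
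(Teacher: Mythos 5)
Your proposal follows essentially the same route as the paper's proof: a case analysis over which pair attains the supremum after the update, with unaffected pairs dismissed immediately because $g_{\X_{\bullet}\X_{\bullet}'}(t)$ was the supremum at step $t$, and the two affected families $(\X,\X_{\bullet})$ and $(\X_{\bullet},\X)$ yielding linear-in-$\alpha$ inequalities whose solutions are exactly $d_{\X_{\bullet}\X}(t)$ and $b_{\X_{\bullet}\X}(t)$. The only notable differences are that you fold the actively minimized pair into the $b$-family whereas the paper handles it as a separate case via convexity of $g$ in $\bm{A}_{\X_{\bullet}}$, and that you explicitly flag the sign issue in linearizing $\ln|\bm{A}_{\X_{\bullet}}^{(t)}-\alpha\bm{G}|$ (concavity makes the first-order expansion conservative only where the log-determinant enters with a plus sign) --- a point the paper's ``with further simplification'' passes over silently, so your treatment is, if anything, the more careful of the two.
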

\begin{proof}
Proof can be found in \Cref{app:step_size_t}.
\end{proof}

\section{Application to ARMA forecasts}\label{sec:ARMA_forecasts}
In this section, we first discuss how one could apply the proposed DP method for class-based data to stationary Gaussian processes, where each class-label $X$ is associated with a Gaussian process with a certain mean and covariance. Then we discuss DP-based ARMA forecasts, where each class has a different set of ARMA model parameters. Finally, we outline how one could produce DP power consumption forecasts while hiding the identity of a certain household. \response{This is necessary since revealing household power consumption patterns can lead to inference about personal activities, thereby violating privacy.}
\par Let $x[k]$ be a stochastic process, whose statistics depend on a label $X$ we want to conceal;  let us define the vector $\bm x$ of samples that contains both $K$ observed samples $\bm x^o$ and $T$ future samples $\bm x^{f}$ whose forecast we want to share without revealing the label $X$, i.e.:
\begin{equation}
 \bm  x=[x[0], \ldots, x[K+T-1]]\T
 =
 \begin{bmatrix}
 \bm x^o\\
 \bm x^f
 \end{bmatrix}.
\end{equation}
\response{The query here is the forecast, $\bm x^f$.}
A well known result from statistics is that the minimum-mean squared error (MMSE) estimator of the forecast is the conditional expectation of $\bm x^f$, given $\bm x^o$, i.e.:
\begin{equation}
 \bm q\equiv
 \hat{\bm x}^f=\mathbb{E}[\bm x^f|\bm x^o].
\end{equation}
For a Gaussian process the statistics of 
$\bm x^f$ conditioned on $\bm x^o$ are also Gaussian. To compute the mean and covariance of the conditional mean we need to also define the covariance of $\bm x$:
\begin{equation}
\mathbb{E}[\bm x]=\begin{bmatrix}
\bm \mu_{\X}^o\\
\bm \mu_{\X}^f
\end{bmatrix},~~~~
\mathbb{E}[\bm x\bm x^\top]
=
\begin{bmatrix}
\bm \Sigma_{\X}^{o} & \bm \Sigma_{\X}^{fo}\\\bm 
\Sigma_{\X}^{of}
&
\bm \Sigma_{\X}^{f}
\end{bmatrix}.
\end{equation}
This implies that for the query $\bm{q}$ under the class-label $X$, the mean and covariance are :
\begin{equation}
\begin{aligned}
\label{eq:Gauss-forecast}
    \bm \mu_{\X}&=
    \bm \mu_{\X}^f+
    \bm \Sigma_{\X}^{fo}
    (\bm \Sigma_{\X}^{o})^{-1}(\bm x^o-\bm \mu_{\X}^o)\\
    \bm \Sigma_{\X}&=
    \bm \Sigma_{\X}^f-
    \bm \Sigma_{\X}^{fo}
    (\bm \Sigma_{\X}^{o})^{-1}\bm \Sigma_{\X}^{of}
\end{aligned}
\end{equation}
At this point, it should be clear that it is possible to adopt the scheme we proposed to share a DP forecast for a Gaussian process. Note that the same formulas provide the optimum linear-least square forecast. 
The typical situation is that the covariance and mean are not known and need to be estimated from the data. This is why it is popular to assume a parametric model for the second and first order statistics. The ARMA model for the second order statistics is an example. 
\begin{figure}
    \centering
    \includegraphics[width=0.35\textwidth]{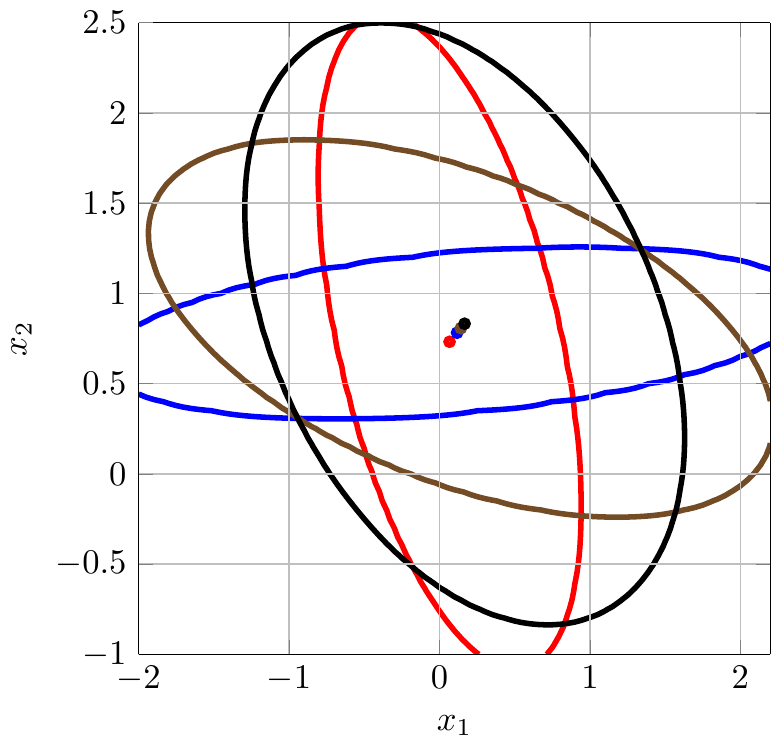}
    \caption{The contour plot of synthetic data.}
    \label{fig:contour_plot}
\end{figure}
\begin{figure*}[!htbp]
 \centering
 \includegraphics[width=\textwidth]{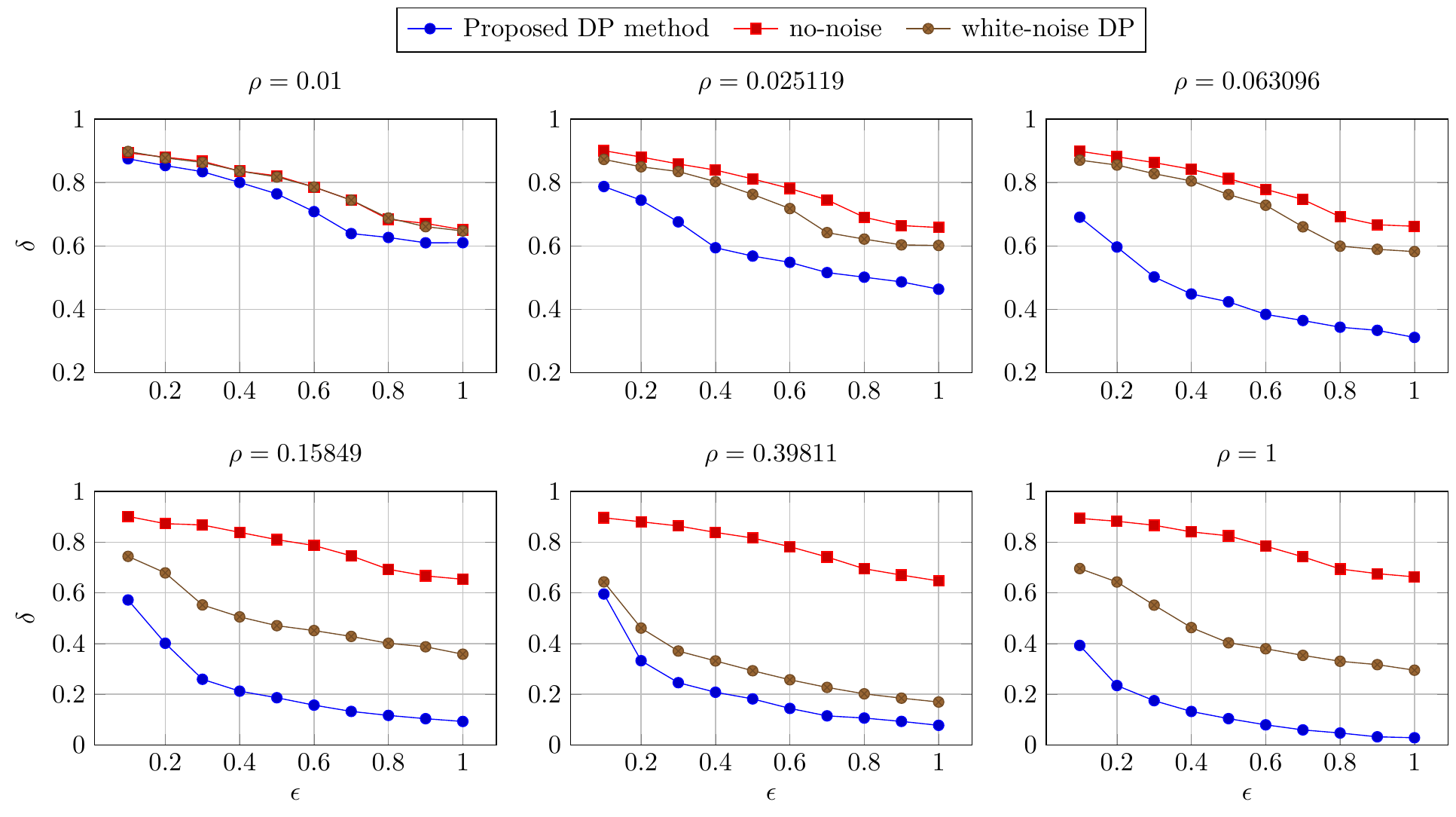}
\caption{The $(\epsilon,\delta)$-private curves with different \textbf{accuracy budgets} (see Def. \ref{def:DPA})  $\rho$ for the synthetic data.}  
  \label{synthetic_arma}
\end{figure*}

\subsection{ARMA forecasts}
An ARMA filter is a discrete time filter  whose frequency response can be parametrized as follows:
\begin{equation}
 H(e^{-j\omega})=
 \frac{\sum_{k=1}^m a_k e^{-j\omega k} }{\sum_{k=0}^n b_k e^{-j\omega k} }. \label{eq:ARMA_freq_resp}
\end{equation}
An ARMA process is a zero mean stationary process that is generated by filtering with an ARMA filter i.i.d. zero mean noise $\xi[k]$ with standard deviation equal to $1$; equivalently,
\begin{equation}
x[k]=-\sum_{i=1}^m a_i x[k-i]+\sum_{i=0}^n b_i \xi[k-i].   \label{arma}
\end{equation}
If the process is not zero mean, then $x[k]$ models the residual after recentering the actual process by subtracting the mean. From the estimate of the parameters of the ARMA filter one can calculate the impulse response $h[k]$ whose frequency response is given in \cref{eq:ARMA_freq_resp} (we skip the expression for brevity). With that, one can construct the Toeplitz matrix of the covariance matrix whose $ik^{\text{th}}$ entry is
\begin{equation}
\big[\mathbb{E}[\bm x\bm x^\top]\big]_{ik}=\sum_{i=-\infty}^{+\infty} h[i]h[k-i]
\end{equation}
and proceed to calculate the forecast. Note that it is possible to answer the query or compute $\bm{q}$ using the filter defined above. However, since our method of numerical calculation relies on the covariance of the process, we rely on the expression in  \cref{eq:Gauss-forecast} to compute the optimum noise distribution for the forecast $\bm{q}$. 
We summarize the process of adding optimal DP noise to ARMA forecasts in \Cref{alg: ARMADP}. 
\begin{algorithm}
\caption{Addition of DP noise to ARMA forecasts} \label{alg: ARMADP}
\SetKwInOut{Input}{input}
\SetKwInOut{Output}{output}
Compute the ARMA forecasting covariance matrix $\bm \Sigma_{\X}$ in \cref{eq:Gauss-forecast} $\forall X \in \mathcal{X}$.

Estimate $\bm{A}_{\X} \forall X \in \mathcal{X}$ according to  \Cref{alg: A_update}.

Obtain optimal DP noise covariance matrix $\bm \Sigma_{\bm \eta|\X}$ in \cref{dpnoisecov}.

Add the optimal DP noise  \response{to the forecast $\bm{q} = \hat{\bm x}^f=\mathbb{E}[\bm x^f|\bm x^o].$}, i.e., $\tilde{\bm{q}} = \bm{q} + \bm{\eta}$ where $\bm{\eta} \sim \mathcal{N}(\bm{0}, \bm \Sigma_{\bm \eta|\X} )$.
\end{algorithm}

\subsection{Application to power measurements data}\label{subsec:arma_power}
To validate our method in the numerical section, we will consider the case in which the data queried is electric load consumption \response{forecast} from a specific household.  \response{High quality forecast of power consumption is important data used for resource planning purposes by power system operators and is also desired by third party data analysts that want to provide additional services. However, power consumption data has shown to reveal user behaviors \cite{zoha2012non} where an adversary can infer the presence or absence of household members based on their heat-pump based consumption and electric-vehicle charging patterns that can be extracted from aggregate power consumption. Therefore, providing power consumption forecast can constitute of significant privacy risk and the households do not want to reveal their power consumption forecast lest they risk being identified. On the other hand, power consumption is highly correlated for households that are geographically proximate. To
mitigate the privacy risk, a solution for the households is to obfuscate their power consumption
forecast output while considering households in a certain location and development or are similar dwelling units. In this way, not only
is the privacy risk alleviated, but the utility of the forecasts is also good which is important to the data analyst.}
\par \response{Daily patterns of power consumption are widely available now to utilities as a part of AMI data. They do not fit a Gaussian multivariate distribution directly. }
In prior work \cite{ravi2022differentially}, however, we showed that they fit well a multivariate log-Normal distribution, which implies that the logarithm of the daily pattern is a multivariate Gaussian vector, which is what we need to apply our method.  Let $p[k]$ be the power consumption at hour $k$ during the day.  The model we adopt is as follows:
\begin{equation}
    x[k]=\log (p[k])-\mu[k]
\end{equation}
where $\mu[k]$ is the seasonal mean estimated averaging $\log(p[k])$ over each day so that $x[k]$ is zero mean\footnote{This is not exactly the case since we are dealing with estimates but is our modeling assumption.}. The assumption we make is that $x[k]$ is an ARMA zero mean stationary Gaussian process, which allows us to apply the scheme we proposed in the previous sections. Note that here the query is the forecast of the samples $p[k],~k=K,\ldots,K+T$ which are not Gaussian, i.e.:
\begin{equation}
\begin{aligned}\label{query1}
      \bm q&=[
    \hat p[K],\dots,
    \hat p[K+T-1]]^\top\\
    \hat p[k]&=e^{\hat x[k]+\mu[k]},~~~k=K,\ldots,K+T-1.
\end{aligned}
\end{equation}
Nonetheless the DP answer $\tilde{\bm q}$ can be computed by applying the optimum Gaussian noise to the samples of the process $\hat x[k]+\mu[k], k=K,\ldots,K+T-1$ and then applying the exponential function to them. 
Since the exponential function is a bijective function, the PDP $(\epsilon,\delta)$ privacy guarantee is maintained after this type of post-processing. 
What we can no longer guarantee is that the accuracy of $\tilde{\bm q}$ is $\rho^{\text{MSE}}_{\mathbbm Q|X}$ in \cref{eq:MS-accuracy-constr}. The resulting accuracy can be calculated, but we skip the derivation for brevity. 

\begin{figure*}[!htb]
  \centering
  \includegraphics[width=\textwidth]{./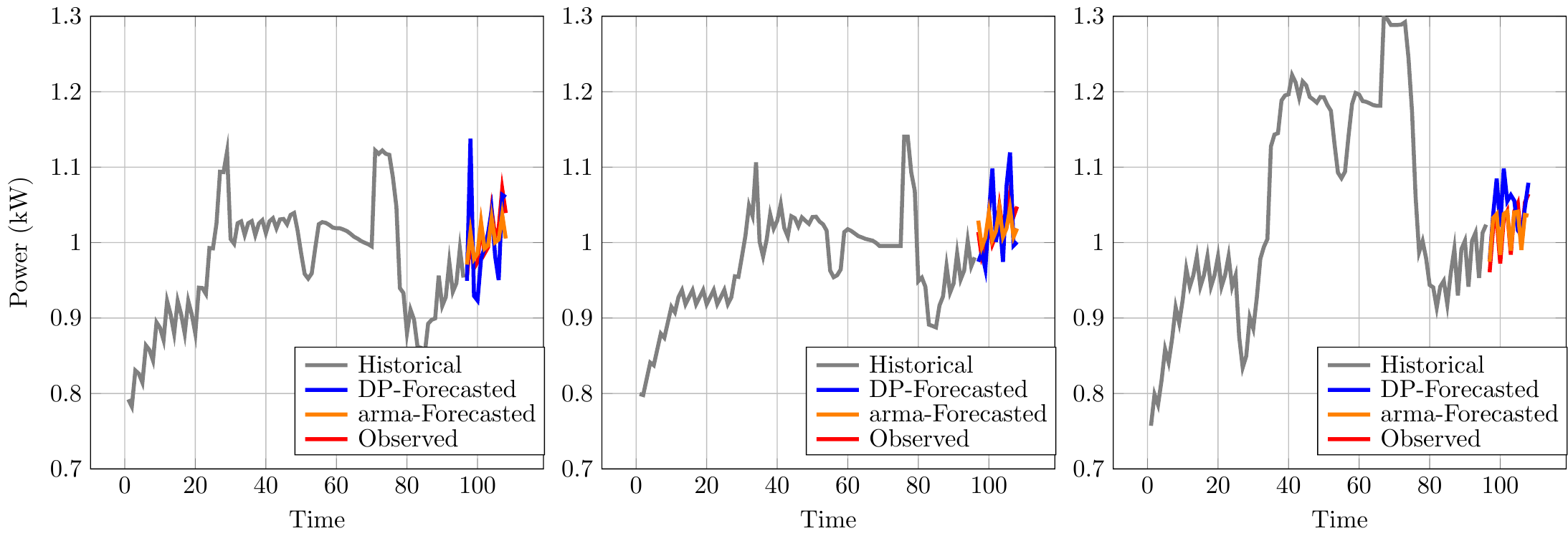}
  	  \caption{ARMA forecasts of power demands of three houses, where the accuracy  of DP-Forecasting is 0.45, i.e., $\rho=0.45$ and $(\epsilon, \delta) = (0.8, 0.5637) $. }
  \label{arma_pred}
\end{figure*}


\begin{figure*}[!htb]
 \centering
 \includegraphics[width=0.961\textwidth]{./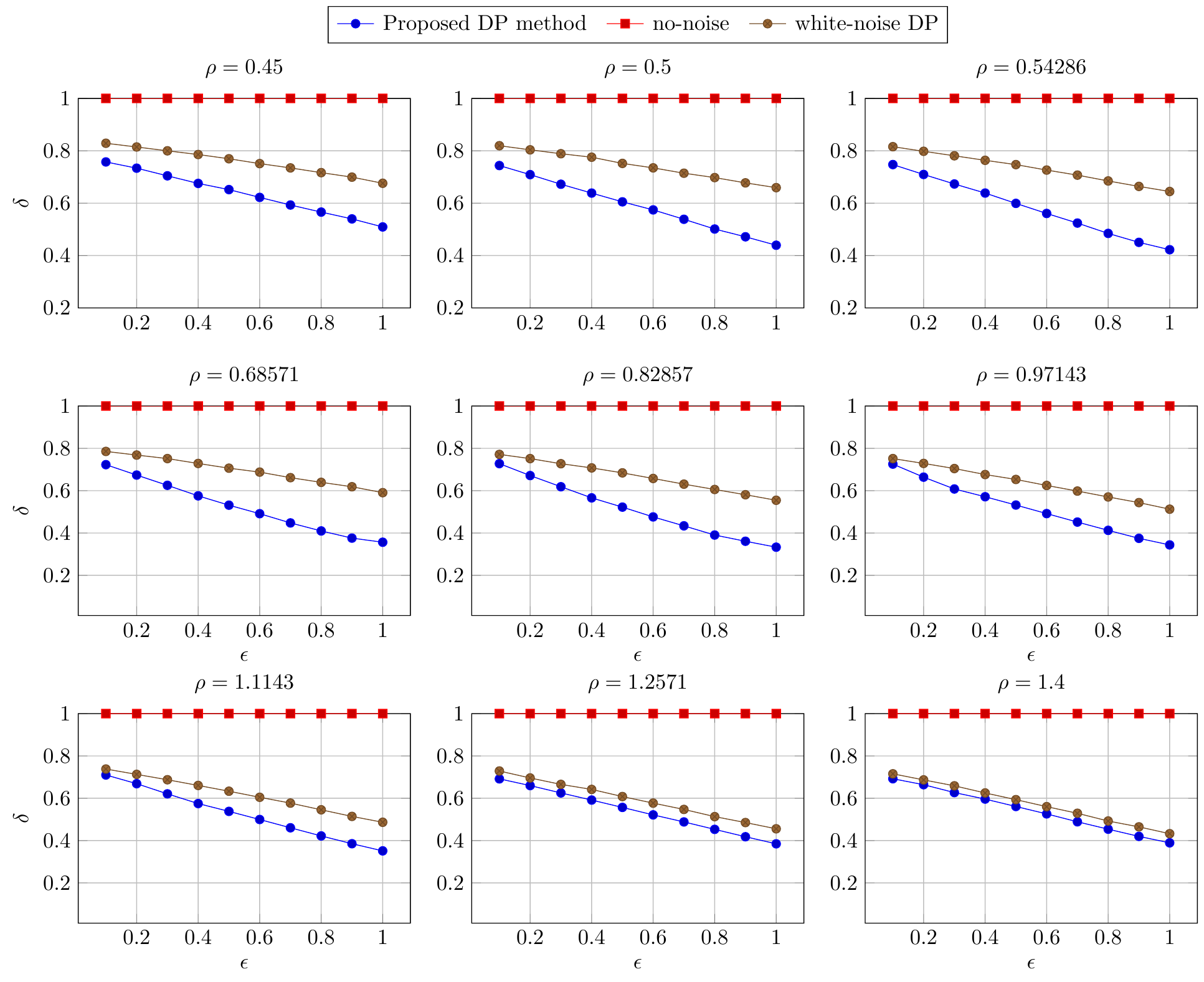}
\caption{The $(\epsilon,\delta)$-private curves with different accuracy budgets $\rho$ for the real-world \response{power consumption} data.}  
   \label{arma_ami}
\end{figure*}

\section{Numerical Results}\label{sec:num_results}
In this section, we illustrate the proposed algorithm with a synthetic dataset and real-world power system \response{power consumption} data.  In the synthetic case, there are several classes and the query corresponding to each class is  bi-dimensional Gaussian with a mean vector and $2\times 2$ covariance  matrix. It is assumed that each class is the neighbor of the other class. \response{Such an assumption is made to compare the results of the proposed method for an example where even adding white Gaussian noise (a baseline) would have good possibility to provide privacy guarantee for a certain level of utility. Note that our proposed method is applicable even in scenarios where this is not the case.} The goal is to conceal the class-label by adding DP noise to the query output. 
\par In the real-world \response{power consumption} case, we first map the forecast of this seasonal and non-Gaussian process onto an ARMA forecasting problem \response{as discussed in Section \ref{subsec:arma_power}}, and then  utilize the $(m=6, n=5)$ ARMA model for training and testing.  
Finally, we utilize the proposed \response{query response mechanism} to protect the ARMA forecasts from households that belong to the same cluster. \response{Here, the idea is that households in the same cluster correspond
to the same geographical region which makes their electricity consumption pattern very geography specific. By considering  all the households in a cluster to be neighbors, utility of the forecasts can be significantly improved while providing good privacy guarantees.  }

\subsection{Synthetic Data}
The synthetic data is generated assuming that there are four classes and all classes are considered as neighbors of the others. The query corresponding to each class follows a two-dimensional Gaussian distribution with mean and covariance matrix. The mean vectors per-class are chosen to lie on a line in the $2$-D space, and the covariance matrix is drawn from a Wishart distribution with scale matrix as the identity and two degrees of freedom. This is shown as a contour plot in \Cref{fig:contour_plot}. Such a case was chosen to highlight the advantage of the proposed method over adding white noise with variance to meet the accuracy requirement.  
\par We compare our method with two baselines. Firstly, the ``no-noise" baseline to understand DP guarantee if no noise was added to the query. One could still expect some inherent privacy, solely due to the fact that the query itself is stochastic and the distributions corresponding to each class-label could be close. Our method starts at this baseline and further improves the privacy guarantee.   
The second baseline is the addition of white Gaussian noise to synthetic data with zero-mean  denoted by ``white noise DP" where the variance of noise $\sigma^2 = \rho/D$, so that the accuracy requirement in \cref{eq:MS-accuracy-constr} is met. Note that the addition of zero-mean white Gaussian noise is commonplace in DP literature, with more focus on choosing the appropriate variance $\sigma^2$. One can think of the baseline as setting of variance in order to meet the accuracy requirement. 
\par We plot the simulation results in \Cref{synthetic_arma} with varying levels of accuracy $\rho$. Note that the no-noise baseline does not depend on the accuracy, and is therefore the same curve throughout. The proposed method outperforms the white-noise baseline irrespective of the accuracy level, although the improvement decreases slightly with larger $\rho$ values. The privacy-accuracy (utility) trade-off is visible in the curves where larger $\rho$ values are associated with better privacy $(\epsilon,\delta)$ values. 



\subsection{Power consumption data forecasting by ARMA Model}
\subsubsection{ARMA Forecasting}
First, we collect 100 households' demands, and then remove the seasonal values (in a weekly scale) from the original demands, and further  take the log values of the nonseasonal demands for training, as described in \cref{arma_process}. Then, we utilize the $k$-means algorithm to cluster the $100$ household's demands into six clusters.  In each cluster, we consider each household   as the neighbor of all other homes. 
Specifically, we choose the 15-day data for training to estimate $a_i$ and $b_j$ parameters in \cref{arma}, where the 15-day data have 1440 samples since the sampling rate  is once per 15 minutes. The forecasting time is three hours with 12 samples, i.e., the dimension of the query is $12$. \response{The forecast trajectory from ARMA model (query), the released forecast trajectory (query response) and the actual power consumption of many households are shown in \Cref{arma_pred}. The gray lines show  the past 24 hours' of data, the red lines show the ground-truth three hours' data, the orange lines show the original query or ARMA forecast trajectory and the blue lines show the released forecast trajectory i.e. query response. }
\response{The first observation  from \Cref{arma_pred} that the ARMA model forecasts (in orange) the future demands with high accuracy compared to , which provides opportunities  for analysts to infer the true demands. This facilitates the need for releasing high-accuracy forecasts with privacy guarantees. The next observation is that applying the proposed query release mechanism (in blue) does not sacrifice the accuracy of forecasts significantly. In \Cref{arma_pred},   the normalized MSE of the released forecast of House 1, House 2 and House 3 are $1.5953e^{-3}$, $1.2475e^{-3}$ and $1.2453e^{-3}$, respectively.}

\subsubsection{Differential Privacy for ARMA Forecasting}
As in the synthetic data case, we show the simulation results with different $\rho$ in \Cref{arma_ami}. Without noise, $\delta$ is always 1, which means that the probability of ARMA forecasted demand being associated with a certain household (class-label) is 100\%. Therefore, it is necessary to design the DP noise for the query of ARMA forecasting. 
With the white noise, $\delta$ is reduced to 0.8 when the accuracy $\rho =0.68571$ and $\epsilon = 0.1$. With $\rho$ fixed and $\epsilon$ increasing from 0.1 to 1.0, $\delta$ decreases slowly from 0.8 to 0.6.
In contrast, with the same $\rho = 0.68571$ and the $\epsilon$ increasing from 0.1 to 1.0, $\delta$ of the proposed optimal DP algorithm decreases fast from 0.72 to 0.35. It indicates that the proposed algorithm has much better performance regarding the privacy protection. Similar to the synthetic data, with $\rho$ increasing (i.e., relaxing the accuracy level), the $(\epsilon, \delta)$ privacy is strengthened.  Another observation is that    with the accuracy budget $\rho$ increasing from 0.45 to 1.4 and with $\epsilon = 1$ fixed, the $\delta$ of the proposed DP algorithm decreases from 0.84 to 0.35, which shows a trade-off between accuracy and privacy. Note that the proposed method offers better guarantee than the white-noise DP method, especially when the accuracy constraint is stronger, i.e., $\rho$ is small. This is quite important considering the utility of the query while also concealing the class-label. 

\section{Conclusion}\label{sec:conclusions}
In this paper, we proposed a new method of differential privacy for label or class-based data, aiming at adding a functional DP noise for the release of query response such that the analyst is unable to infer the underlying class-label. Moreover, we demonstrated that the $(\epsilon, \delta)$-private is guaranteed to be satisfied while meeting the predefined accuracy budget $\rho$.
We illustrate  the effectiveness of the proposed method on the synthetic data, which outperforms the baseline additive white Gaussian noise mechanism.  We further consider the ARMA forecasting problem for the power consumption measurements.  
Then, we implement the proposed DP method to the  ARMA  forecasting method in order to protect the privacy of the \response{power consumption} measurements forecasts. Our empirical case studies on both the synthetic data and  real-world \response{power consumption} measurements validate the effectiveness and advantages  of the proposed method.
\appendix
\subsection{Proof of \texorpdfstring{\Cref{prop:log-likelihood-gauss}}{Proposition 1}}\label{proof:L(xi)}
\begin{proof}
First, note that we can arrange the terms in $ L_{\XX}(\bm q)$ as:
\begin{align}
    L_{\XX}(\bm q)=&\frac{1}{2}\ln | {\bm{\Sigma}}_{\X'} | -  \frac{1}{2} \ln | {\bm{\Sigma}}_{\X} | \nonumber\\
    &+ \frac{1}{2}(\bm{q}-\bm \mu_{\X})\T(\bm \Sigma_{\X'}^{-1}-\bm \Sigma_{\X}^{-1})(\bm{q}-\bm \mu_{\X})\nonumber\\
    &+(\bm \mu_{\X}-\bm \mu_{\X'})\T\bm \Sigma_{\X'}^{-1}(\bm{q}-\bm \mu_{\X})\nonumber\\
    &+\frac{1}{2}(\bm \mu_{\X'}-\bm \mu_{\X})\T\bm \Sigma_{\X'}^{-1}(\bm \mu_{\X'}-\bm \mu_{\X}).\label{eq:LX-Gaussian}
\end{align}
For simplicity, we omit the suffix $XX'$. First, \cref{eq:xi} re-centers, whitens and then rotates the normal vector $\bm q$, which results in  the i.i.d normal vector $\bm \xi$. To derive the expression of $L(\bm \xi)$ from \cref{eq:LX-Gaussian}
we observe that:
\begin{align*}
(\bm \Sigma_{\X'}^{-1}-\bm \Sigma_{\X}^{-1})&=\bm 
\Sigma_{\X}^{-1/2}
(\bm
\Sigma_{\X}^{1/2}\bm \Sigma_{\X'}^{-1}\bm
\Sigma_{\X}^{1/2}-\bm I)\bm
\Sigma_{\X}^{-1/2}\\
&=\bm 
\Sigma_{\X}^{-1/2}\bm U\bm U\T
(\bm
\Sigma_{\X}^{1/2}\bm \Sigma_{X'}^{-1}\bm
\Sigma_{\X}^{1/2}-\bm I)\bm U\bm U\T\bm
\Sigma_{\X}^{-1/2}\\
&=\bm \Sigma_{\X}^{-1/2}\bm U(\bm \Gamma-I)\bm U\T\bm
\Sigma_{\X}^{-1/2}
\end{align*}
which implies that 
\[(\bm{q}-\bm \mu_{\X})\T(\bm \Sigma_{\X'}^{-1}-\bm \Sigma_{\X}^{-1})(\bm{q}-\bm \mu_{\X})=\bm \xi\T(\bm \Gamma-I)\bm \xi.\]
The other two terms are obtained similarly:
\begin{align*}
    (\bm \mu_{\X}\!-\!&\bm \mu_{\X'})\T\bm \Sigma^{-1}_{\X'}(\bm q\!-\!\bm \mu_{\X})=(\bm \mu_{\X}\!-\!\bm \mu_{\X'})\T\bm \Sigma_{\X}^{-1/2}\bm U\\
    &\cdot \bm U\T(\bm 
    \Sigma_{\X}^{1/2}\bm \Sigma^{-1}_{\X'}\bm \Sigma_{\X}^{1/2})\bm U \bm U\T\bm \Sigma_{\X}^{-/2}(\bm q\!-\!\bm \mu_{\X})=-\bm \mu\T \bm \Gamma\bm \xi.
\end{align*}
\begin{align*}
    (\bm \mu_{\X'}\!-\!&\bm \mu_{\X})\T\bm \Sigma^{-1}_{\X'}(\bm \mu_{\X'}\!-\!\bm \mu_{\X})=(\bm \mu_{\X'}\!-\!\bm \mu_{\X})\T\bm \Sigma_{\X}^{-1/2}\bm U\\
    &\cdot \bm U\T(\bm 
    \Sigma_{\X}^{1/2}\bm \Sigma^{-1}_{\X'}\bm \Sigma_{\X}^{1/2})\bm U
    \bm U\T\bm \Sigma_{\X}^{-1/2}(\bm \mu_{\X'}\!-\!\bm \mu_{\X})=
    \bm \mu\T \bm \Gamma\bm \mu.
\end{align*}
\end{proof}
\subsection{Proof of \texorpdfstring{\Cref{lem:chernoff-bound}}{Lemma 1}}\label{app:chernoff-bound}

\begin{proof}
Again, we omit the suffixes $XX'$ to streamline the notation. The idea is to leverage \Cref{prop:log-likelihood-gauss} and find a bound for $Pr(L(\bm \xi)>\epsilon)$ through the Chernoff bound. More specifically, considering the expression \cref{eq:L(xi)}, for $\overline{s}>0$:
\begin{align*}
Pr(L(\bm \xi)>\epsilon)&=
Pr(e^{\frac {\overline{s}} 2 \left(\bm \xi\T(\bm \Gamma\!-\!\bm I)\bm \xi-2{\bm \mu}\bm \Gamma \bm \xi+{\bm \mu}\T\bm \Gamma {\bm \mu}\right)} >e^{\overline{s}\epsilon+\frac {\overline{s}} 2\ln|\bm \Gamma|})\\
&\leq \frac{
\mathbb{E}\left[
e^{\frac {\overline{s}} 2 \left(\bm \xi\T(\bm \Gamma\!-\!\bm I)\bm \xi-2{\bm \mu}\bm \Gamma \bm \xi+{\bm \mu}\T\bm \Gamma {\bm \mu}\right)}
\right]
}{
e^{\overline{s}\epsilon}|\bm \Gamma|^{\overline{s}/2}} 
\end{align*}
The expectation $\mathfrak{J}(\bm \Gamma, {\bm \mu})\triangleq{\mathbb E}[e^{
   \frac{\overline{s}}{2}
    \left( \bm \xi\T(\bm \Gamma\!-\!\bm I)\bm \xi-2{\bm \mu}\T\bm \Gamma\bm \xi+ {\bm \mu}\T\bm \Gamma{\bm \mu}\right)
   }]$ can be evaluated in closed form by solving the following integral:
\begin{align}\label{eq:integral}
\mathfrak{J}(\bm \Gamma, {\bm \mu})=
\!\int \!\frac{
   e^{
   \frac{\overline{s}}{2}
    \left( \bm \xi\T(\bm \Gamma\!-\!\bm I)\bm \xi-2{\bm \mu}\T\bm \Gamma\bm \xi+ {\bm \mu}\T\bm \Gamma{\bm \mu}\right)-\frac{\bm \xi\T\bm \xi}{2}
   }}{(2\pi)^{k/2}}~\mathrm{d}\bm \xi.
\end{align}
The goal is to express the exponent as a quadratic form. Combining the quadratic terms and extracting the factor $ -\frac{\overline{s}}{2}$ the expression becomes:
\begin{align*}
    -\frac{\overline{s}}{2}
    \left(
    \bm{\xi}\T\left((1\!+\!\overline{s}^{-1})\bm I\!-\!\bm \Gamma\right)\bm\xi
    +2{\bm \mu}\T\bm \Gamma \bm\xi-{\bm \mu}\T\bm \Gamma{\bm \mu}
    \right)
\end{align*}
The expression is streamlined by the following substitution for $\overline{s}$ and definition of
$\hat{\bm \mu}$:
\begin{align}
    s&=1\!+\overline{s}^{-1}~\Rightarrow~\overline{s}=(s-1)^{-1}, \overline{s}>0 \implies s > 1\\
    \hat{\bm \mu}&\triangleq -(s\bm I-\bm\Gamma)^{-1}\bm \Gamma{\bm \mu}.
\end{align}
so that the exponent can then be rearranged as follows:
\begin{align*}
   -&\frac{
\bm \xi\T(s\bm I\!-\!\bm \Gamma)\bm \xi-2\hat{\bm\mu}\T (s\bm I\!-\!\bm \Gamma)\bm\xi
\pm\hat{\bm\mu}\T (s\bm I\!-\!\bm \Gamma)\hat{\bm\mu}
-{\bm \mu}\T\bm \Gamma{\bm \mu}}{2(s\!-\!1)}=\\
-&\frac{(\bm \xi-\hat{\bm \mu})\T(s\bm I\!-\!\bm \Gamma)
(\bm \xi-\hat{\bm \mu})}{2(s\!-\!1)}+\frac{s}{2(s\!-\!1)}{\bm \mu}\T(s\bm I\!-\!\bm \Gamma)^{-1}\bm \Gamma{\bm \mu}
\end{align*}
hence, if and only if $s\bm I -\bm \Gamma$ is positive definite, which means that $s>\gamma_1=\lambda_{\max}(\bm\Sigma_{\X}^{1/2}\bm \Sigma_{\X'}^{-1}\bm\Sigma_{\X}^{1/2})$, we have:
\begin{align*}
    \mathfrak{J}(\bm \Gamma, {\bm \mu})&=\frac{e^{\frac{s}{2(s\!-\!1)}{\bm \mu}\T(s\bm I\!-\!\bm \Gamma)^{-1}\bm \Gamma{\bm \mu}}}{(2\pi)^k}\int e^{-\frac{(\bm \xi-\hat{\bm \mu})\T(s\bm I\!-\!\bm \Gamma)
(\bm \xi-\hat{\bm \mu})}{2(s\!-\!1)}}~\mathrm{d}\bm \xi
\\&=
\frac{e^{\frac{s}{2(s\!-\!1)}{\bm \mu}\T(s\bm I\!-\!\bm \Gamma)^{-1}\bm \Gamma{\bm \mu}}(s\!-\!1)^{k/2}}
{|s\bm I\!-\!\bm \Gamma|^{1/2}}
\end{align*}
We can conclude that:
\begin{align}
    Pr(L(\bm \xi)>\epsilon)&\leq 
    \frac{(s-1)^{k/2}e^{-\frac{\epsilon}{(s\!-\!1)}+\frac{s}{2(s\!-\!1)}{\bm \mu}\T(s\bm I-\bm \Gamma)^{-1}\bm \Gamma{\bm \mu}}}
{|\bm \Gamma|^{\frac{1}{2(s\!-\!1)}}|s\bm I-\bm \Gamma|^{1/2}}
\end{align}
where the bound is the expression in \cref{eq:chernoff}.
\end{proof}
\subsection{Proof of \texorpdfstring{\Cref{lem:alpha_t}}{Lemma 2}}\label{app:step_size_t}

\begin{proof}
	Upon updating  $\bm{A}_{\X_{\bullet}(t) } (t)$, the cost $J_{t+1}$ is, 
	\begin{align}
	J_{t+1} = \sup_{\X   } \sup_{  \X' \in  {\cal X}_X^{(1)}   }  \left\{ g_{\X_{\bullet}{(t)} \X_{\bullet}(t)},  g_{\X  \X_{\bullet}(t)    }, g_{\X_{\bullet}(t)\X}, g_{\XX} \right\}  
	\end{align}
	We now divide the analysis into different cases to derive the condition on the  step size. 
	\begin{itemize}
     \item 	If $J_{t+1} = g_{\XX}$  where $\X \neq \X_{\bullet}(t), \X' \neq \X_{\bullet}(t) $, then $J_{t} - J_{t+1} \geq 0 $ since
	 terms $g_{\XX}$ were unaffected by the update and at step $t$ $g_{\X_{\bullet}{(t)} \X_{\bullet}(t)}$ was the supremum thus,  
	 $J_{t} - J_{t+1}  = g_{\X_{\bullet}{(t)} \X_{\bullet}(t)} -g_{\XX} \geq 0 $. 
	 \item If $J_{t+1} = g_{\X_{\bullet}{(t)} \X_{\bullet}(t)}$, it means that the chosen term to minimize is still the supremum,
	 \begin{align}
	 J_{t} - J_{t+1} &= g_{\X_{\bullet}{(t)} \X_{\bullet}(t)} ( \bm{A}_{\X_{\bullet}(t) } (t), \bm{A}_{\X_{\bullet}(t) } (t)  ) \nonumber\\
	 &- g_{\X_{\bullet}{(t)} \X_{\bullet}(t)} ( \bm{A}_{\X_{\bullet}(t) } (t+1),\bm{A}_{\X_{\bullet}(t) } (t)  ) 
	 \end{align}
	 Since $g_{\X_{\bullet}{(t)} \X_{\bullet}(t)} ( \bm{A}_{\X_{\bullet}(t) } (t), \bm{A}_{\X_{\bullet}(t) } (t)  ) $ is convex in $\bm{A}_{\X_{\bullet}(t) } (t)$, the update for $\bm{A}_{\X_{\bullet}(t) } (t)$ ensures that $J_{t} - J_{t+1} > 0$. However, for an appropriate step-size, one can either resort to backtracking line search or use a small fixed step size. 
	 \item If $J_{t+1} =  g_{\X  \X_{\bullet}(t)    }$, to derive the condition for $\alpha_t$ such that $J_{t} - J_{t+1} >0$ consider,
	 \begin{align}
	       J_{t} - J_{t+1} &= g_{\X_{\bullet}{(t)} \X_{\bullet}(t)}\left( \bm{A}_{\X_{\bullet}(t) } (t), \bm{A}_{\X_{\bullet}(t) } (t)  \right) \\
	       &-  g_{\X  \X_{\bullet}(t)    } \left(  \bm{A}_{\X_{\bullet}(t) } (t+1), \bm{A}_{\X } (t)   \right) \\
	       & = \tilde{\bm{v}}_{\X_{\bullet}\X_{\bullet} }^{\T} \bm{A}_{\X_{\bullet} }^{ (t)} \tilde{\bm{v}}_{\X_{\bullet}\X_{\bullet} }  - \tilde{\bm{v}}_{\X \X_{\bullet} }^{\T} \bm{A}_{\X_{\bullet}(t) } (t+1) \tilde{\bm{v}}_{\X \X_{\bullet} } \nonumber\\
	       & - \ln | \bm{A}_{\X_{\bullet}(t)} (t)  | + \ln | \bm{A}_{\X_{\bullet} (t)} (t)  | \nonumber \\
	       &+ \ln | \bm{A}_{\X_{\bullet}(t)} (t+1)  | - \ln | \bm{A}_{\X} (t)  |
	 \end{align}
	 With further simplification and by imposing the condition $J_{t} - J_{t+1} >0$ , we get 
	 \begin{align}
	    \alpha_{\X_{\bullet}(t) }\leq  d_{\X_{\bullet}(t), \X } 
	 \end{align}
	 \item If $J_{t+1} =  g_{\X_{\bullet}(t)\X}$, consider $J_{t} - J_{t+1} $, 
	 \begin{align}
	    J_{t} - J_{t+1} &= g_{\X_{\bullet}{(t)} \X_{\bullet}(t)}\left( \bm{A}_{\X_{\bullet}(t) } (t), \bm{A}_{\X_{\bullet}(t) } (t)  \right) \nonumber\\
	    &-  g_{  \X_{\bullet}(t) \X    } \left(  \bm{A}_{\X  } (t), \bm{A}_{\X_{\bullet} (t) } (t+1)   \right) \\
	    & =  \tilde{\bm{v}}_{\X_{\bullet}\X_{\bullet} }^{\T} \bm{A}_{\X_{\bullet} }^{ (t)}   \tilde{\bm{v}}_{\X_{\bullet}\X_{\bullet} } - \tilde{\bm{v}}_{\X \X_{\bullet} }^{\T} \bm{A}_{\X  } (t) \tilde{\bm{v}}_{\X \X_{\bullet} } \nonumber\\
	    &  - \ln | \bm{A}_{\X_{\bullet}(t)} (t)  | + \ln | \bm{A}_{\X_{\bullet} (t)} (t)  | \nonumber \\
	    &+ \ln | \bm{A}_{\X} (t)  | - \ln | \bm{A}_{\X_{\bullet}(t)} (t+1)  |
	 \end{align}
	  With further simplification and by imposing the condition $J_{t} - J_{t+1} >0$ , we get:
	  \begin{align}
	  \alpha_{\X_{\bullet}(t) } \leq  b_{\X_{\bullet}(t), \X } 
	  \end{align}
	  where $b_{\X_{\bullet}(t)}$ is given in \eqref{eq:b_xx}.
	 \end{itemize}
\end{proof}
\bibliographystyle{IEEEtran}
\bibliography{Differential_privacyMarkov}
\end{document}